\documentclass[11pt]{article}

\usepackage[margin=1.1in]{geometry}
\usepackage{amsmath,amssymb,amsthm,mathtools}
\usepackage{mathrsfs}
\usepackage{bm}
\usepackage{enumitem}
\usepackage{microtype}
\usepackage{hyperref}
\usepackage{tikz-cd}
\usepackage{amsmath,amssymb}
\usepackage{doi}

\DeclareMathOperator{\Bal}{Bal}
\DeclareMathOperator{\Hom}{Hom} 

\hypersetup{
  colorlinks=true,
  linkcolor=blue,
  citecolor=blue,
  urlcolor=blue
}

\theoremstyle{plain}
\newtheorem{theorem}{Theorem}[section]
\newtheorem{proposition}[theorem]{Proposition}

\newtheorem{corollary}[theorem]{Corollary}

\theoremstyle{definition}
\newtheorem{definition}[theorem]{Definition}
\newtheorem{assumption}[theorem]{Assumption}

\theoremstyle{remark}
\newtheorem{remark}[theorem]{Remark}

\newcommand{\QCD}{\mathrm{QCD}}
\newcommand{\DIS}{\mathrm{DIS}}
\newcommand{\OPE}{\mathrm{OPE}}
\newcommand{\MSbar}{\overline{\mathrm{MS}}}
\newcommand{\End}{\mathrm{End}}

\newcommand{\cV}{\mathcal{V}}

\newcommand{\mconv}{\mathbin{\circledast}}

\newcommand{\vC}{\bm{C}}
\newcommand{\vf}{\bm{f}}

\newcommand{\dd}{\mathrm{d}}

\title{\bf A Core Representation Theorem for Scheme-Invariant Collinear Factorization in QCD}
\author{Dustin Keller}
\date{\today}

\begin{document}
\maketitle

\begin{abstract}
Collinear factorization and the leading-twist operator product expansion (OPE) in perturbative QCD express
suitably inclusive observables in scale-separated kinematics as composites of perturbative short-distance
coefficients with universal long-distance non-perturbative correlators such as parton distribution functions (PDFs),
up to controlled power corrections. A persistent structural feature is \emph{presentation non-uniqueness}:
coefficients and correlators are not individually physical, but are defined only up to finite factorization-scheme
redefinitions induced by collinear subtractions and renormalized-operator mixing.
We formalize this redundancy categorically by introducing an \emph{interface algebra object} encoding admissible
finite collinear counterterms/mixing kernels and by organizing coefficient data and hadronic data as right/left
modules over this algebra in a symmetric monoidal category encoding the chosen recomposition calculus.
Our main result, the \emph{Core Representation Theorem}, identifies the universal scheme-invariant carrier:
the functor of balanced (scheme-invariant) pairings is represented by the relative tensor product $C\otimes_A f$,
which is terminal among all quotients of the naive composite $C\otimes f$ that preserve scheme-invariant semantics.
Finally, we show how standard physics inputs (symmetry constraints, locality/OPE, and a stated accuracy truncation)
canonically induce the interface algebra and module structures, and we prove a minimal closure principle for
completing a generating set of long-distance operators/correlators to an $A$-stable sector.
\end{abstract}

\noindent{\bf Keywords:} QCD factorization; operator product expansion; renormalization mixing; scheme invariance; monoidal categories; balanced tensor products; relative tensor products.

\section{Introduction}

\subsection{Factorization as a presentation of observable content}
Factorization theorems are among the most powerful structural results in perturbative QCD.
In kinematics with a hard scale $Q\gg\Lambda_{\QCD}$ and in sufficiently inclusive observables,
one can write cross sections or structure functions as composites of
(i) short-distance coefficient functions, computable in perturbation theory, and
(ii) long-distance hadronic correlators (PDFs, distribution amplitudes, \dots) which encode hadron structure
and are universal within a specified class of observables.
Canonical accounts include Collins' monograph \cite{Collins:FoPQCD}, the Collins--Soper--Sterman review \cite{CSS:Factorization},
and the PDG review of structure functions \cite{PDG2024StructureFunctions}.

A technical and conceptual difficulty is that this decomposition is not unique:
coefficients and correlators are \emph{auxiliary} constituents whose separate definitions depend on choices
of collinear subtraction, renormalization of composite operators, and operator basis in the presence of mixing.
In practice one makes (and must make) finite redefinitions
\begin{equation}\label{eq:intro-scheme}
  \vf \mapsto \vf' = Z\mconv \vf,\qquad
  \vC \mapsto \vC' = \vC\mconv Z^{-1},
\end{equation}
with an invertible matrix of kernels $Z$, leaving the recomposed observable unchanged to the stated perturbative and power accuracy
\cite{Collins:FoPQCD,PDG2024StructureFunctions}.
In moment space this is the familiar statement that operator mixing and compensating Wilson-coefficient transformations
leave physical moments invariant \cite{Wilson1969,Wilson1970,WilsonZimmermann1972,Collins:FoPQCD}.
We emphasize from the outset that this non-uniqueness is \emph{structural}: it is a built-in redundancy
forced by collinear subtractions and renormalized-operator mixing, not an ambiguity to be ``fixed'' by a preference for a particular scheme.

\subsection{Scheme-invariant semantics}
The organizing idea of this paper is to treat a factorized description as a \emph{presentation} of an underlying observable.
Fix a phase-space regime $R$ (e.g.\ leading-power collinear kinematics) and a stated accuracy $\alpha$
(e.g.\ leading twist and a perturbative truncation).
Within that fixed physics envelope, different scheme choices correspond to different presentations of the \emph{same} observable content.

Categorically, we model this as follows.
There is a ``presentation category'' of factorization data, whose objects encode the ingredients needed to write a factorized formula
(e.g.\ a coefficient object and a correlator object together with their symmetry/scale structure),
and whose isomorphisms encode admissible finite scheme changes such as \eqref{eq:intro-scheme}.
The observable map---the recomposition rule that sends a presentation to a numerical/physical prediction---is then required to be invariant
under these isomorphisms.
The \emph{scheme-invariant semantics}, is then:
\begin{quote}
  \emph{presentation-invariant observable content, i.e.\ invariance under isomorphisms in the scheme-change groupoid of presentations.}
\end{quote}
This is intentionally a 1-categorical notion of invariance (up to isomorphism).
One can imagine refinements where scheme changes form higher coherences and ``semantics'' is required to be homotopy-invariant,
but such higher-categorical structure is not needed for collinear/OPE factorization at fixed accuracy and we do not assume it.

\subsection{Parsimony as semantics-preserving quotienting and irreducibility}
The second organizing idea is a precise notion of \emph{parsimony}.
In much of the physics and machine-learning literature, ``parsimony'' is used informally as a proxy for simplicity
(e.g.\ fewer parameters, fewer terms, smaller expressions).
Here we use a structural notion:

\begin{quote}
\emph{A representation is parsimonious if it contains no internal redundancy beyond that forced by the scheme-change equivalence,
while preserving \textbf{all} scheme-invariant observable content at the chosen $(R,\alpha)$.}
\end{quote}

In the present setting, ``redundancy'' is not an aesthetic preference but an explicit equivalence generated by admissible finite redefinitions.
Our main theorem shows that this redundancy can be quotiented out \emph{universally}.
More precisely, we construct an \emph{interface algebra object} $A$ encoding admissible finite collinear counterterms/mixing kernels,
organize coefficient data as a right $A$-module and correlator data as a left $A$-module in a symmetric monoidal category $\cV$
encoding the chosen recomposition calculus, and prove that the functor of scheme-invariant pairings is \emph{representable}:
it is represented by the relative tensor product $C\otimes_A f$.

From the physics perspective, $C\otimes_A f$ is the formalization of dividing out the scheme redundancy:
it is obtained from the naive composite $C\otimes f$ by imposing the balancing relation
\begin{equation}\label{eq:intro-balance}
  (C\cdot a)\otimes f \;\sim\; C\otimes (a\cdot f),\qquad a\in A,
\end{equation}
which expresses the fact that inserting an admissible finite counterterm on the coefficient side is equivalent to inserting it on the correlator side.
For mathematicians, the construction is a coequalizer/coinvariant object (a relative tensor product) in $\cV$.

This universal property makes parsimony \emph{theorem-like} rather than heuristic.
The core $C\otimes_A f$ is terminal among all quotients of $C\otimes f$ through which every scheme-invariant evaluation factors.
In this sense it is an \emph{irreducible carrier of scheme-invariant information}:
any further quotient would necessarily identify two presentations that are distinguished by some scheme-invariant observable.
(Our use of ``irreducible'' is therefore semantic/universal---``no further compression without loss of invariant content''---and is distinct from,
e.g., irreducibility of a representation in the sense of having no nontrivial invariant subspaces.)

\subsection{Symbolic regression and learned long-distance objects}
Parsimony is a central objective in \emph{symbolic regression} and equation discovery, where one searches for concise analytic expressions
that fit data while controlling complexity.
Classical and modern approaches often treat accuracy and expression complexity as competing objectives and explore a Pareto frontier
\cite{SmitsKotanchek2005Pareto,SchmidtLipson2009Distilling,LaCavaEtAl2021SRBench},
or enforce sparsity/compactness via regularization ideas \cite{Brunton2016SINDy}.
Physics-inspired methods exploit structural properties such as symmetry, separability, and compositionality to find compact expressions
\cite{UdrescuTegmark2020AIFeynman},
and scalable toolchains implement multiobjective equation search for scientific datasets \cite{Cranmer2023PySR};
see also the surveys \cite{MakkeChawla2024SRReview}.

The present paper does \emph{not} propose a new symbolic-regression algorithm.
Instead, it supplies a structural ``pre-parsimony'' step dictated by known physics:
before any attempt to compress long-distance information (whether by a human ansatz, a fit parameterization, or a learned surrogate model),
one should quotient out the internal factorization-scheme redundancy to expose the scheme-invariant interface that actually composes with
short-distance data.
This viewpoint is compatible with treating long-distance correlators as abstract numerical objects (including neural-network surrogates),
because the theorem only assumes the existence of the $A$-module actions that encode admissible scheme changes.
We view this as a foundation for future work in which learned long-distance objects are manipulated in a symbolic calculus that respects
scheme invariance by construction.

\subsection{Aim, contributions, and outline}
We do \emph{not} attempt to prove factorization from first principles.
Rather, we assume as physics input the existence of a factorized/OPE description for a specified observable class in a specified regime $R$
to a specified accuracy $\alpha$.
Within that fixed framework, we:
(i) construct the interface algebra $A$ and module structures induced by standard physics constraints;
(ii) prove the Core Representation Theorem (Theorem~\ref{thm:core}), identifying $C\otimes_A f$ as the universal scheme-invariant carrier;
and (iii) derive a minimal closure principle for completing a generating set of long-distance operators/correlators to an $A$-stable sector.

Section~\ref{sec:physics} reviews collinear factorization and OPE mixing so as to isolate the scheme-change redundancy.
Section~\ref{sec:cat} fixes the categorical environment and recalls balanced morphisms and relative tensor products, with physicist-oriented intuition.
Section~\ref{sec:presentations} formalizes factorization presentations and the scheme-invariant semantic target.
Section~\ref{sec:core-thm} states and proves the Core Representation Theorem and its minimality corollaries,
and explains what the theorem does \emph{not} require of the recomposition operation.
Section~\ref{sec:from-physics} explains how symmetries, locality, and truncation canonically induce the interface algebra and module structures.
Section~\ref{sec:dis-instantiation} instantiates the framework for inclusive DIS, including a concrete toy computation.
Section~\ref{sec:filtered} discusses systematic refinement beyond leading power via filtrations.
Section~\ref{sec:context} gives a brief contextual comparison (SCET/resummation, Hopf-algebraic renormalization, factorization algebras)
and clarifies limitations.

\section{Physics input: collinear factorization and OPE mixing}\label{sec:physics}

\subsection{Collinear factorization in inclusive DIS (leading power)}
In inclusive DIS, the structure functions $F_i(x,Q^2)$ admit, in the Bjorken regime, a leading-power
factorization formula
\begin{equation}\label{eq:dis-factorization}
F_i(x,Q^2)=\sum_{a\in\{q,\bar q,g\}}\bigl(C_i^a(\cdot;Q,\mu)\mconv f_a(\cdot;\mu)\bigr)(x)+\mathcal{O}\!\left(\frac{M^2}{Q^2}\right),
\end{equation}
with Mellin convolution
\begin{equation}\label{eq:mellin-conv}
(C\mconv f)(x)=\int_x^1 \frac{\dd y}{y}\,C(y)\,f\!\left(\frac{x}{y}\right).
\end{equation}
Here $\mu$ denotes the factorization scale (and, when needed, also the renormalization scale; we suppress a separate $\mu_R$)
and the power corrections are controlled by the operator twist expansion.
See \cite{PDG2024StructureFunctions,Collins:FoPQCD,CSS:Factorization}.

\subsection{Finite scheme transformations}
The factorized constituents in \eqref{eq:dis-factorization} are not unique:
a change of factorization scheme (equivalently, a finite change of renormalized operator basis)
acts by an invertible matrix of kernels $Z$ (in flavor space) via
\begin{equation}\label{eq:scheme}
\vf' = Z\mconv \vf,\qquad \vC_i'=\vC_i\mconv Z^{-1},
\end{equation}
leaving the physical composite invariant (to the stated perturbative/power accuracy):
\begin{equation}\label{eq:scheme-inv}
\vC_i\mconv \vf=\vC_i'\mconv \vf'.
\end{equation}
This scheme redundancy is emphasized in standard references \cite{Collins:FoPQCD,PDG2024StructureFunctions} and has also been exploited directly in the DIS literature to construct factorization-scheme-invariant (“physical”) evolution kernels for selected structure-function bases and scheme-invariant non-singlet evolution \cite{BlumleinRavindranvanNeerven2000,BlumleinSaragnese2021}.

\subsection{OPE viewpoint and operator mixing}
In Mellin moment space, the OPE expresses products of currents at short distances in a basis of local operators with
Wilson coefficients \cite{Wilson1969,Wilson1970,WilsonZimmermann1972}. For DIS moments one obtains schematically
\begin{equation}\label{eq:ope-moment}
\int_0^1 \dd x\, x^{n-2}F_i(x,Q^2)\simeq \sum_j C_{i,j}^{(n)}\!\left(\frac{Q^2}{\mu^2},\alpha_s(\mu)\right)\,
\langle P|O_j^{(n)}(\mu)|P\rangle+\mathcal{O}\!\left(\frac{1}{Q^{p}}\right),
\end{equation}
where $O_j^{(n)}$ are twist-two operators (higher twists contribute to power-suppressed terms).
Renormalization induces operator mixing
\begin{equation}\label{eq:mix}
O_{j,\mathrm{bare}}^{(n)}=\sum_k Z_{jk}^{(n)}(\mu)\,O_k^{(n)}(\mu),
\end{equation}
and Wilson coefficients transform with the inverse so that physical moments are $\mu$-independent up to higher-twist terms:
\begin{equation}\label{eq:coeff-inv}
C_{i,\mathrm{bare}}^{(n)}=\sum_k C_{i,k}^{(n)}(\mu)\,\bigl(Z^{(n)}(\mu)\bigr)^{-1}_{kj}.
\end{equation}
This is the moment-space analogue of \eqref{eq:scheme}--\eqref{eq:scheme-inv}.

\subsection{Structural takeaway}
Both the $x$-space and moment-space viewpoints share the same algebraic pattern:
\begin{quote}
\emph{There is a class of admissible finite redefinitions (scheme transformations) acting on the factors such
that the recomposed observable is invariant.}
\end{quote}
Our categorical formalism encodes the admissible redefinitions as an algebra action and identifies the universal
scheme-invariant composite.

\section{Categorical framework}\label{sec:cat}

\subsection{Physics intuition: quotienting scheme redundancy}
For a physicist, the relative tensor product $C\otimes_A f$ should be read as a \emph{quotient construction}:
it is obtained from the naive composite $C\otimes f$ by imposing the balancing relation
\[
(C\cdot a)\otimes f \;\sim\; C\otimes (a\cdot f)\qquad (a\in A),
\]
i.e.\ by declaring that ``inserting an admissible finite counterterm $a$ on the coefficient side'' is equivalent to
``inserting it on the correlator side.'' This is exactly the algebraic content of scheme invariance
(e.g.\ $C\mconv Z^{-1}$ paired with $Z\mconv f$ gives the same observable).
In this sense, $C\otimes_A f$ plays the same conceptual role as taking coinvariants under a gauge action:
one modds out internal redundancies but does not discard any invariant information.
Formally, the ``modding out'' is implemented by a coequalizer, which in linear settings is simply a quotient by a
subspace generated by the balancing relations.

\subsection{Ambient symmetric monoidal category}
We work in a symmetric monoidal category $(\cV,\otimes,\mathbf{1})$ modeling the chosen recomposition calculus.
For collinear DIS in $x$-space, $\otimes$ may be realized by Mellin convolution \eqref{eq:mellin-conv};
in Mellin moment space it is realized by pointwise multiplication.
The theorem itself requires only structural properties.

\begin{assumption}[Existence of balanced tensor products]\label{ass:ambient}
Let $(\cV,\otimes,\mathbf{1})$ be a cocomplete symmetric monoidal category such that $\otimes$ preserves the coequalizers
used to define relative tensor products, separately in each variable.
\end{assumption}

\begin{remark}[Why coequalizers are not exotic in QCD practice]\label{rem:coeq}
If $\cV=\mathrm{Vect}_k$ (or $\cV=\mathrm{Mod}_R$), then all coequalizers exist and are computed as quotients.
Concretely, for a right $A$-module $N$ and a left $A$-module $M$,
\[
N\otimes_A M \;\cong\; (N\otimes_k M)\Big/\mathrm{span}_k\Big\{(n\cdot a)\otimes m - n\otimes (a\cdot m)
:\; n\in N,\; m\in M,\; a\in A\Big\}.
\]
(For $\mathrm{Mod}_R$, replace $\otimes_k$ and $\mathrm{span}_k$ by $\otimes_R$ and the $R$-submodule generated.)
This is the algebraic shadow of the familiar statement that scheme choices change $(C,f)$ but not $C\mconv f$.
Analytic realizations relevant to QCD (kernels as distributions with plus-prescriptions, end-point singularities, etc.)
can be accommodated by choosing $\cV$ to be a cocomplete linear tensor category of generalized functions
(e.g.\ suitable topological vector spaces) in which convolution/multiplication is associative and continuous; we keep these analytic requirements explicit in Assumption~\ref{ass:ambient}
and the required quotient/coequalizer exists; Assumption~\ref{ass:ambient} packages these analytic hypotheses.
For concrete distributional convolution calculi used in QCD factorization, see e.g.\ \cite{Collins:FoPQCD}.
\end{remark}

\subsection{Algebra objects and modules}
\begin{definition}[Algebra object]
An \emph{algebra object} (monoid object) in $\cV$ is an object $A$ equipped with multiplication
$m:A\otimes A\to A$ and unit $\eta:\mathbf{1}\to A$ satisfying associativity and unitality.
\end{definition}

\begin{definition}[Modules]
A \emph{left} $A$-module is an object $M$ with an action $\lambda:A\otimes M\to M$; a \emph{right} $A$-module
is an object $N$ with an action $\rho:N\otimes A\to N$.
\end{definition}

\subsection{Balanced morphisms and relative tensor products}
\label{subsec:balanced-and-relative-tensor}

\paragraph{Coherence convention.}
Throughout, $\mathcal V$ is a (symmetric) monoidal category with associator
$\alpha_{X,Y,Z} : (X\otimes Y)\otimes Z \xrightarrow{\sim} X\otimes (Y\otimes Z)$
and unitors. In physics notation one often suppresses parentheses and coherence maps.
Since the balancing relation mixes the two parenthesizations, we write the relevant associator
explicitly below; this is the fully correct non-strict formulation and is the one formalized in Lean.

\begin{remark}[Coherence shorthand]\label{rem:coherence-shorthand}
For the remainder of the paper we freely identify the two parenthesizations $((X\otimes Y)\otimes Z)$ and
$(X\otimes (Y\otimes Z))$ via the associator, and we write $X\otimes Y\otimes Z$ for either parenthesization
when the choice is immaterial by coherence. In particular, expressions such as $\mathrm{id}_X\otimes \lambda$
are understood to include the necessary rebracketing isomorphisms.
\end{remark}

\begin{definition}[{$A$-balanced morphism}]
\label{def:balanced-morphism}
Let $A$ be an algebra object in $\mathcal V$. Let $(N,\rho_N)$ be a right $A$-module and
$(M,\lambda_M)$ a left $A$-module, with actions
\[
\rho_N : N\otimes A \to N,
\qquad
\lambda_M : A\otimes M \to M.
\]
(We write $\rho_N,\lambda_M$ to avoid confusion with the \emph{unitors} of the monoidal structure.)

For any object $X\in\mathcal V$ and any morphism $\varphi : N\otimes M \to X$, define the two
parallel action-induced morphisms
\[
r := (\rho_N \otimes \mathrm{id}_M) : (N\otimes A)\otimes M \to N\otimes M,
\]
\[
\ell := \alpha_{N,A,M}\,;\,(\mathrm{id}_N \otimes \lambda_M) : (N\otimes A)\otimes M \to N\otimes M.
\]
We say that $\varphi$ is \emph{$A$-balanced} if the following equality holds:
\begin{equation}
\label{eq:balanced-eq}
r\,;\,\varphi \;=\; \ell\,;\,\varphi
\qquad\text{as morphisms } (N\otimes A)\otimes M \to X.
\end{equation}
Equivalently, the diagram
\begin{equation}
\label{eq:balanced-diagram}
\begin{tikzcd}[column sep=large,row sep=large]
 (N\otimes A)\otimes M \ar[r,"r"] \ar[d,"\ell"'] & N\otimes M \ar[d,"\varphi"] \\
 N\otimes M \ar[r,"\varphi"'] & X
\end{tikzcd}
\end{equation}
commutes.
\end{definition}

\begin{definition}[{Relative tensor product}]
\label{def:relative-tensor-product}
Assume the coequalizer of the parallel pair $(r,\ell)$ exists in $\mathcal V$.
The \emph{relative tensor product} (or \emph{balanced tensor product})
$N\otimes_A M$ is defined to be the coequalizer
\begin{equation}
\label{eq:coeq-def}
 (N\otimes A)\otimes M \;\substack{\xrightarrow{\ r\ }\ \\[-1mm] \xrightarrow[\ \ell\ ]{}}\;
 N\otimes M \xrightarrow{q} N\otimes_A M,
\end{equation}
so that $r\,;\,q = \ell\,;\,q$ and $q$ is universal with this property.
\end{definition}

\begin{proposition}[{Universal property of $N\otimes_A M$}]
\label{prop:universal-property-relative-tensor}
Under the hypotheses above, for every $X\in\mathcal V$ composition with $q$ induces a natural bijection
\begin{equation}
\label{eq:universal-property-bijection}
\Hom_{\mathcal V}(N\otimes_A M,\,X)
\;\cong\;
\{\varphi\in\Hom_{\mathcal V}(N\otimes M,\,X)\mid \varphi \text{ is $A$-balanced}\}.
\end{equation}
Equivalently: a morphism $\varphi:N\otimes M\to X$ is $A$-balanced if and only if there exists a unique
$\overline{\varphi}:N\otimes_A M\to X$ with $\varphi = q\,;\,\overline{\varphi}$.
\end{proposition}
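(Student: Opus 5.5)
The proof is a direct unwinding of the coequalizer universal property in Definition~\ref{def:relative-tensor-product}. The plan is to define the comparison map, check that it lands in balanced morphisms, invoke the universal property of $q$ to get a bijection, and then verify naturality in $X$.

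\textbf{The comparison map.} Define
\[
\Phi_X:\Hom_{\mathcal V}(N\otimes_A M,X)\to \Hom_{\mathcal V}(N\otimes M,X),\qquad \psi\mapsto q\,;\,\psi .
\]
Its image consists of $A$-balanced morphisms. Indeed, since $r\,;\,q=\ell\,;\,q$, associativity of composition gives
\[
r\,;\,(q\,;\,\psi)=(r\,;\,q)\,;\,\psi=(\ell\,;\,q)\,;\,\psi=\ell\,;\,(q\,;\,\psi),
\]
which is exactly \eqref{eq:balanced-eq} for $\varphi=q\,;\,\psi$. So $\Phi_X$ corestricts to a map into the set on the right of \eqref{eq:universal-property-bijection}.

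\textbf{Bijectivity.} Let $\varphi$ be $A$-balanced, so $r\,;\,\varphi=\ell\,;\,\varphi$. The universal property of the coequalizer $q$ gives a unique $\overline\varphi:N\otimes_A M\to X$ with $q\,;\,\overline\varphi=\varphi$. Existence of $\overline\varphi$ is surjectivity of $\Phi_X$. Injectivity comes from the uniqueness clause, or equivalently from $q$ being an epimorphism (every coequalizer is epic): if $q\,;\,\psi=q\,;\,\psi'$, then $\psi$ and $\psi'$ both factor the balanced map $q\,;\,\psi$, so $\psi=\psi'$.

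The ``equivalently'' formulation needs both directions. If some $\overline\varphi$ with $\varphi=q\,;\,\overline\varphi$ exists, then $\varphi$ is balanced by the computation in the first step. Conversely, a balanced $\varphi$ admits a unique such factorization by the argument just given.

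\textbf{Naturality.} For $g:X\to Y$, compare the two composites. Postcomposing with $g$ and then applying $\Phi_Y$ sends $\psi$ to $q\,;\,(\psi\,;\,g)$. Applying $\Phi_X$ and then postcomposing with $g$ sends $\psi$ to $(q\,;\,\psi)\,;\,g$. These agree by associativity of composition. Postcomposition also preserves balancedness, since $r\,;\,\varphi\,;\,g=\ell\,;\,\varphi\,;\,g$. Hence the balanced morphisms form a subfunctor of $\Hom_{\mathcal V}(N\otimes M,-)$, and $\Phi$ is a natural isomorphism onto it.

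\textbf{Main obstacle.} There is no substantive difficulty; the argument is purely formal. The only point needing care is bookkeeping of the associator $\alpha_{N,A,M}$ hidden in $\ell$. Since balancedness is stated using exactly the same parallel pair $(r,\ell)$ whose coequalizer defines $N\otimes_A M$, no coherence argument is required. In particular, the result does not use symmetry, cocompleteness, or Assumption~\ref{ass:ambient} beyond the mere existence of the coequalizer.
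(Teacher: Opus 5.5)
Your proof is correct and follows the paper's argument: both reduce the statement to the coequalizer universal property, checking that $q\,;\,\overline{\varphi}$ is balanced because $r\,;\,q=\ell\,;\,q$, and obtaining the unique factorization of a balanced $\varphi$ from the coequalizer. The only difference is that you write out injectivity and naturality in $X$, which the paper treats as immediate.
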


\begin{proof}
This is exactly the universal property of the coequalizer \eqref{eq:coeq-def}. Concretely:
given $\overline{\varphi}:N\otimes_A M\to X$, the composite $q\,;\,\overline{\varphi}$ is balanced because
$r\,;\,q=\ell\,;\,q$. Conversely, if $\varphi:N\otimes M\to X$ is balanced, i.e. satisfies
\eqref{eq:balanced-eq}, then by the coequalizer universal property there exists a unique
$\overline{\varphi}$ with $\varphi=q\,;\,\overline{\varphi}$. Naturality in $X$ is immediate.
\end{proof}

\section{Factorization presentations and scheme-invariant semantics}\label{sec:presentations}

The Core Representation Theorem is a representability statement about \emph{scheme-invariant} (balanced) pairings.
To emphasize how this interfaces with physics practice, we fix an explicit semantic target (``agreement to a stated accuracy'')
and then define what we mean by a factorization \emph{presentation} of that semantic target.

\subsection{Accuracy data and semantic target}
Fix a kinematic regime $R$ (e.g.\ the Bjorken regime) and an \emph{accuracy datum} $\alpha$ (e.g.\ truncation in twist and perturbative order).
We encode ``agreement up to $\alpha$'' by an object $O_{R,\alpha}\in \cV$ representing observables on $R$ modulo terms beyond $\alpha$.
In linear settings, one may take $O_{R,\alpha}$ to be a quotient $O_R / \mathcal{R}_{>\alpha}$ by a remainder subobject.

\begin{remark}
This semantic step is where ``no information loss'' is enforced: the comparison between presentations is performed only after modding out
the remainder class $\mathcal{R}_{>\alpha}$. If one strengthens $\alpha$ (e.g.\ includes higher twist), one refines the semantic target.
Section~\ref{sec:filtered} shows how to organize these refinements systematically.
\end{remark}

\subsection{Presentation category and scheme groupoid}
Fix an algebra object $A$ in $\cV$ (the interface algebra) and the semantic target $O_{R,\alpha}$.

\begin{definition}[Presentation]\label{def:presentation}
A \emph{factorization presentation} (at fixed $(R,\alpha)$) is a triple $(N,M,\phi)$ where $N$ is a right $A$-module,
$M$ is a left $A$-module, and $\phi:N\otimes M\to O_{R,\alpha}$ is an $A$-balanced morphism.
\end{definition}

\begin{definition}[Presentation category]\label{def:P}
Let $\mathsf{P}(A;O_{R,\alpha})$ be the category whose objects are presentations $(N,M,\phi)$ as in Definition~\ref{def:presentation}.
A morphism $(N,M,\phi)\to (N',M',\phi')$ is a pair of $A$-linear maps $u:N\to N'$ (right-linear) and $v:M\to M'$ (left-linear)
such that $\phi' \circ (u\otimes v)=\phi$.
Equivalently, morphisms are exactly those pairs $(u,v)$ for which the following triangle commutes:
\begin{equation}\label{eq:presentation-morphism}
\begin{tikzcd}
N\otimes M \ar[rr, "\phi"] \ar[dr, swap, "u\otimes v"] && O_{R,\alpha}\\
& N'\otimes M' \ar[ur, swap, "\phi'"] &
\end{tikzcd}
\end{equation}
\end{definition}

\begin{remark}[Scheme groupoid and its QCD reading]\label{rem:scheme-groupoid}
In many physics applications one restricts attention to \emph{invertible} scheme changes, and one passes to the maximal subgroupoid
$\mathsf{S}(A;O_{R,\alpha})\subset \mathsf{P}(A;O_{R,\alpha})$ consisting of isomorphisms.

For inclusive DIS, one takes $A=A_{\mathrm{coll}}$ (Section~\ref{sec:dis-instantiation}). A typical object is the familiar pair
$(\vC_i,\vf)$ together with the convolution evaluation $\Phi_i(\vC_i,\vf)=\sum_a C_i^a\mconv f_a$.
An arrow in $\mathsf{S}(A_{\mathrm{coll}};O_{R,\alpha})$ is implemented by an invertible kernel matrix $Z\in A_{\mathrm{coll}}^\times$:
it acts on PDFs by $v_Z:\vf\mapsto Z\cdot \vf$ and on coefficients by $u_Z:\vC_i\mapsto \vC_i\cdot Z^{-1}$, and the defining commutativity
$\Phi_i\circ(u_Z\otimes v_Z)=\Phi_i$ is exactly the standard scheme-invariance identity
$(\vC_i\mconv Z^{-1})\mconv (Z\mconv \vf)=\vC_i\mconv \vf$.

Symmetry constraints (e.g.\ flavor $SU(n_f)$ in massless QCD, charge conjugation, and the twist/spin quantum numbers fixed by $(R,\alpha)$)
restrict which $Z$ are admissible: only kernels preserving the corresponding block decomposition of operator/PDF sectors appear.
\end{remark}

\subsection{Balanced semantics as a functor}
For fixed right/left $A$-modules $(N,M)$, define the functor
\begin{equation}\label{eq:Bal}
\Bal_A(N,M):\cV\to \mathbf{Set},\qquad
X\mapsto \{\varphi:N\otimes M\to X \mid \varphi \text{ is $A$-balanced}\}.
\end{equation}
The physical meaning is:
\begin{quote}
\emph{Scheme-invariant observables are exactly the $A$-balanced evaluations.}
\end{quote}
The Core Representation Theorem identifies the representing object for \eqref{eq:Bal}.

\section{Core Representation Theorem}\label{sec:core-thm}

\begin{theorem}[Core Representation Theorem]
\label{thm:core}
Assume the relative tensor product $N\otimes_A M$ exists as in Definition~\ref{def:relative-tensor-product}.
Then the functor
\[
\Bal_A(N,M) : \mathcal V \to \mathbf{Set},
\qquad
X \mapsto \{\varphi:N\otimes M\to X \mid \varphi\ \text{$A$-balanced}\},
\]
is representable by $N\otimes_A M$. Equivalently, for every $X\in\mathcal V$ there is a natural bijection
\begin{equation}
\label{eq:core-bijection}
\Hom_{\mathcal V}(N\otimes_A M, X)\;\cong\;\Bal_A(N,M)(X),
\end{equation}
carried by $\overline{\varphi}\mapsto q\,;\,\overline{\varphi}$.

Moreover:
\begin{enumerate}
\item[(i)] \textbf{(Universal factorization)} Every $A$-balanced morphism $\varphi:N\otimes M\to X$
factors uniquely as
\[
N\otimes M \xrightarrow{q} N\otimes_A M \xrightarrow{\overline{\varphi}} X.
\]

\item[(ii)] \textbf{(Terminal scheme-invariant quotient)} Let $\mathsf Q$ be the category whose objects are
epimorphisms $p:N\otimes M \twoheadrightarrow Q$ in $\mathcal V$ such that for every $X\in\mathcal V$,
every $A$-balanced morphism $\varphi:N\otimes M\to X$ factors through $p$ (necessarily uniquely, since $p$
is epi). A morphism $(p:N\otimes M\twoheadrightarrow Q)\to (p':N\otimes M\twoheadrightarrow Q')$
is a morphism $u:Q\to Q'$ with $p' = p\,;\,u$.
Then $q:N\otimes M \twoheadrightarrow N\otimes_A M$ is a terminal object of $\mathsf Q$.
\end{enumerate}
\end{theorem}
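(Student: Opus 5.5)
The representability statement and part (i) follow from Proposition~\ref{prop:universal-property-relative-tensor}, so the plan is to cite it, add a short naturality check, and then spend the real effort on (ii).

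\textbf{Representability and (i).} Proposition~\ref{prop:universal-property-relative-tensor} already gives, for each $X$, a bijection $\Hom_{\cV}(N\otimes_A M,X)\to\Bal_A(N,M)(X)$, $\overline{\varphi}\mapsto q\,;\,\overline{\varphi}$. It remains to check naturality in $X$. First, $\Bal_A(N,M)$ is a functor: for $g:X\to Y$ and $\varphi$ balanced, $r\,;\,\varphi\,;\,g=\ell\,;\,\varphi\,;\,g$, so $\varphi\,;\,g$ is balanced. Naturality then reduces to associativity, $q\,;\,(\overline{\varphi}\,;\,g)=(q\,;\,\overline{\varphi})\,;\,g$. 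By Yoneda this says exactly that $N\otimes_A M$ represents $\Bal_A(N,M)$, with universal element $q\in\Bal_A(N,M)(N\otimes_A M)$. Part (i) is the same bijection read pointwise: surjectivity gives existence of $\overline{\varphi}$, and injectivity gives uniqueness.

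\textbf{Step 1 for (ii): $q$ lies in $\mathsf Q$.} A coequalizer is always an epimorphism. If $u\,;\,g=u'\,;\,g$ after composing with $q$, both composites are balanced factorizations of the same map, so uniqueness forces $u=u'$. The factorization requirement is (i).

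\textbf{Step 2 for (ii): existence of the map into $q$.} Let $p:N\otimes M\twoheadrightarrow Q$ be any object of $\mathsf Q$. Since $q$ is itself $A$-balanced (by definition of the coequalizer), the defining property of $\mathsf Q$ applied with $X=N\otimes_A M$ and $\varphi=q$ gives some $u:Q\to N\otimes_A M$ with $p\,;\,u=q$. This is a morphism $p\to q$ in $\mathsf Q$.

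\textbf{Step 3 for (ii): uniqueness.} If $u,u'$ both satisfy $p\,;\,u=q=p\,;\,u'$, then $u=u'$ because $p$ is epi. Hence $\Hom_{\mathsf Q}(p,q)$ is a singleton for every $p$, and $q$ is terminal.

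The only point needing care is the direction of morphisms in $\mathsf Q$. Terminality means arrows go \emph{from} an arbitrary admissible quotient $p$ \emph{to} $q$. So the argument must use the factorization property of $p$ applied to the balanced map $q$, not the universal property of $q$ applied to $p$.

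That is necessary here because $p$ itself need not be balanced. If it were, we would additionally get $\overline{p}:N\otimes_A M\to Q$ with $q\,;\,\overline{p}=p$. Then $u$ and $\overline{p}$ would be mutually inverse, using that $p$ and $q$ are both epi. So $q$ is in fact the unique balanced object of $\mathsf Q$ up to isomorphism, which is the ``no further compression'' reading; I will note this as a remark rather than use it in the proof.
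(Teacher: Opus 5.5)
Your proposal is correct and follows essentially the same route as the paper: representability and (i) are read off from the coequalizer's universal property (Proposition~\ref{prop:universal-property-relative-tensor}). Terminality of $q$ comes from applying the defining factorization property of an arbitrary $p\in\mathsf Q$ to the balanced map $q$, with uniqueness because $p$ is epi. Your explicit naturality check and the closing remark (a balanced $p$ is isomorphic to $q$) go slightly beyond the paper's text but do not change the argument.
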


\begin{proof}
The representability statement \eqref{eq:core-bijection} and part (i) are exactly
Proposition~\ref{prop:universal-property-relative-tensor}.

For (ii), first note that $q$ is an epimorphism (coequalizer maps are epis) and that every
balanced $\varphi$ factors through $q$ by (i), hence $q$ is an object of $\mathsf Q$.

Now let $p:N\otimes M \twoheadrightarrow Q$ be any object of $\mathsf Q$.
The coequalizer map $q:N\otimes M\to N\otimes_A M$ is itself $A$-balanced, because
$r\,;\,q=\ell\,;\,q$ by Definition~\ref{def:relative-tensor-product}. By the defining property of objects in
$\mathsf Q$, this balanced morphism $q$ must factor through $p$, i.e. there exists a morphism
$u:Q\to N\otimes_A M$ such that
\[
q = p\,;\,u.
\]
This precisely exhibits a morphism $p\to q$ in $\mathsf Q$. Uniqueness of $u$ follows because $p$ is epi.
Therefore $q$ is terminal in $\mathsf Q$.
\end{proof}

\subsection{Immediate corollaries: invariance and minimality}
\begin{corollary}[Invariance under module isomorphisms]\label{cor:inv}
If $u:N\to N'$ is an isomorphism of right $A$-modules and $v:M\to M'$ is an isomorphism of left $A$-modules,
then there is a canonical isomorphism
\[
N\otimes_A M \cong N'\otimes_A M'.
\]
\end{corollary}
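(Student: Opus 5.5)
The plan is to construct mutually inverse morphisms $N\otimes_A M\rightleftarrows N'\otimes_A M'$ directly from the universal property in Proposition~\ref{prop:universal-property-relative-tensor}. Write $q:N\otimes M\to N\otimes_A M$ and $q':N'\otimes M'\to N'\otimes_A M'$ for the coequalizer maps. Consider the composite
\[
\psi := (u\otimes v)\,;\,q' : N\otimes M \to N'\otimes_A M'.
\]
I will show that $\psi$ is $A$-balanced. By Proposition~\ref{prop:universal-property-relative-tensor} it then factors uniquely as $\psi = q\,;\,\overline{\psi}$ with $\overline{\psi}:N\otimes_A M\to N'\otimes_A M'$. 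Exchanging the roles of $(u,v)$ and $(u^{-1},v^{-1})$ gives $\overline{\psi'}:N'\otimes_A M'\to N\otimes_A M$. The inverses $u^{-1},v^{-1}$ are again $A$-linear, since the inverse of an isomorphism commuting with the actions also commutes with them.

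The key step is the balancing check. The claim is that
\[
r\,;\,(u\otimes v) = (u\otimes \mathrm{id}_A\otimes v)\,;\,r',
\qquad
\ell\,;\,(u\otimes v) = (u\otimes \mathrm{id}_A\otimes v)\,;\,\ell'.
\]
The first identity follows from right $A$-linearity of $u$, namely $\rho_N\,;\,u = (u\otimes\mathrm{id}_A)\,;\,\rho_{N'}$, together with bifunctoriality of $\otimes$. The second follows from left $A$-linearity of $v$ together with naturality of the associator $\alpha$ with respect to $u$, $\mathrm{id}_A$, $v$. Given these, we compute
\[
r\,;\,\psi = (u\otimes\mathrm{id}\otimes v)\,;\,r'\,;\,q' = (u\otimes\mathrm{id}\otimes v)\,;\,\ell'\,;\,q' = \ell\,;\,\psi,
\]
using $r';q'=\ell';q'$. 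This intertwining of the parallel pairs is the only place where any real content enters, and it is where I expect the main, though modest, obstacle: keeping the associator bookkeeping honest in the non-strict setting of Definition~\ref{def:balanced-morphism}. I would handle it by writing $\ell$ explicitly as $\alpha_{N,A,M}\,;\,(\mathrm{id}_N\otimes\lambda_M)$ and invoking naturality of $\alpha$ as a separate step.

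Finally, to see that the two induced maps are inverse, note that
\[
q\,;\,\overline{\psi}\,;\,\overline{\psi'} = (u\otimes v)\,;\,(u^{-1}\otimes v^{-1})\,;\,q = q = q\,;\,\mathrm{id}.
\]
Since $q$ is an epimorphism, this gives $\overline{\psi}\,;\,\overline{\psi'}=\mathrm{id}$, and the symmetric argument gives the other composite. Canonicity is the statement that $\overline{\psi}$ is the unique map with $q\,;\,\overline{\psi}=(u\otimes v)\,;\,q'$, which is exactly what the uniqueness clause of Proposition~\ref{prop:universal-property-relative-tensor} provides. The same computation also shows that the assignment is functorial in $(u,v)$.
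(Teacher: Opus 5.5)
Your proof is correct and takes the same approach as the paper, which proves the corollary by appealing to functoriality of the relative tensor product in module maps. You construct that functoriality explicitly: you check that $(u\otimes v)\,;\,q'$ is $A$-balanced using $A$-linearity and naturality of the associator, then use that coequalizer maps are epimorphisms to show the induced maps are mutually inverse, which supplies the bookkeeping the paper leaves implicit.
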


\begin{proof}
Functoriality of relative tensor products in bimodule maps yields the claimed isomorphism.
\end{proof}

\begin{corollary}[Minimality for monotone complexity measures]\label{cor:min}
Let $d:\mathrm{Ob}(\cV)\to (P,\le)$ be any functional monotone under quotients (epimorphisms).
Then $d(N\otimes_A M)\le d(N\otimes M)$. Moreover, among quotients in $\mathsf{Q}$ of Theorem~\ref{thm:core}(ii),
$N\otimes_A M$ is minimal in this sense.
\end{corollary}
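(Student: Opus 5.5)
The first claim, $d(N\otimes_A M)\le d(N\otimes M)$, follows directly from monotonicity. The second claim is read as: for every object $p:N\otimes M\twoheadrightarrow Q$ of $\mathsf Q$ we have $d(N\otimes_A M)\le d(Q)$.

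For the first claim, I use the fact that the coequalizer map $q:N\otimes M\to N\otimes_A M$ of Definition~\ref{def:relative-tensor-product} is an epimorphism. This is recalled in the proof of Theorem~\ref{thm:core}(ii), and also follows directly because any two maps agreeing after $q$ are both the unique factorization of the same balanced map. So $N\otimes_A M$ is a quotient of $N\otimes M$ via $q$. I interpret ``monotone under quotients'' as: whenever there is an epimorphism $X\twoheadrightarrow Y$, then $d(Y)\le d(X)$. Applying this to $q$ gives $d(N\otimes_A M)\le d(N\otimes M)$.

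For the minimality statement, take any object $p:N\otimes M\twoheadrightarrow Q$ of $\mathsf Q$. By Theorem~\ref{thm:core}(ii) there is a unique $u:Q\to N\otimes_A M$ with $q=p\,;\,u$. I then check that $u$ is an epimorphism. Suppose $u\,;\,g=u\,;\,h$. Then $q\,;\,g=p\,;\,u\,;\,g=p\,;\,u\,;\,h=q\,;\,h$, and since $q$ is epi, $g=h$. This is the standard fact that if a composite $p\,;\,u$ is epi then its second factor $u$ is epi. Monotonicity applied to $u$ then gives $d(N\otimes_A M)\le d(Q)$. Hence $N\otimes_A M$ attains the minimum value of $d$ over $\mathsf Q$. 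It is also minimal in the weaker sense that no object of $\mathsf Q$ has strictly smaller $d$-value.

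The only real obstacle is interpretive. I must fix precisely what ``monotone under quotients'' means, namely that epimorphisms do not increase $d$, and I must read ``minimal'' as ``least value among objects of $\mathsf Q$''. I would state both readings explicitly before the proof. With those conventions the argument is purely formal and uses only the terminality from Theorem~\ref{thm:core}(ii) and the cancellation property of epimorphisms. No assumptions about the ambient category $\cV$ beyond existence of the coequalizer are needed.
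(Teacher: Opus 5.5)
Your proof is correct and follows the same route as the paper: $q$ being an epimorphism gives $d(N\otimes_A M)\le d(N\otimes M)$, and terminality of $q$ in $\mathsf Q$ gives minimality. Your explicit check that the comparison map $u:Q\to N\otimes_A M$ is itself epi (because $q=p\,;\,u$ is epi) is the step the paper's one-line ``terminality implies minimality'' leaves implicit, and it is exactly what lets monotonicity of $d$ be applied to $u$.
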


\begin{proof}
The map $q$ exhibits $N\otimes_A M$ as a quotient of $N\otimes M$, hence monotonicity gives the inequality.
Terminality in $\mathsf{Q}$ implies minimality among scheme-invariant quotients.
\end{proof}

\subsection{Recomposition}\label{sec:op-agnostic}
The Core Representation Theorem is \emph{not} a statement about any particular analytic formula (a specific convolution, integral transform, or closed form).
It uses only the following structural inputs:
\begin{enumerate}[leftmargin=2.0em]
\item a symmetric monoidal product $\otimes$ implementing ``recomposition'' of the two factors (associative up to coherent isomorphism),
\item an interface algebra object $A$ acting on the two factors as a right and a left module, and
\item existence of the coequalizer defining the relative tensor product $N\otimes_A M$ (Assumption~\ref{ass:ambient}).
\end{enumerate}
Accordingly, the theorem \emph{cannot} be applied precisely when these structural hypotheses fail.
In QCD language, this includes situations where the would-be recomposition rule is genuinely non-compositional (non-associative or lacking the needed interchange/Fubini properties),
where finite counterterms cannot be represented as a legitimate algebra action on \emph{both} factors (often signaling missing leading regions or incomplete subtraction data),
or where the chosen analytic category does not admit the required quotients/coequalizers for the kernel objects under consideration.

\subsection{Changes of representation}
It is common in perturbative QCD to pass to a representation (Mellin moments, Laplace, Fourier, \dots) in which the recomposition law becomes
simpler (e.g.\ multiplication rather than convolution). Categorically, such a change is modeled by a strong monoidal functor.

\begin{proposition}[Core commutes with strong monoidal transforms]\label{prop:monoidal}
Let $T:(\cV,\otimes,\mathbf{1})\to (\cV',\otimes',\mathbf{1}')$ be a strong monoidal functor preserving the coequalizers used to define relative tensor products.
Then, naturally in right/left modules $(N,M)$ over $A$,
\[
T(N\otimes_A M)\cong T(N)\otimes_{T(A)} T(M),
\]
and $T$ carries $A$-balanced maps to $T(A)$-balanced maps.
\end{proposition}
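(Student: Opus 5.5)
The proof has two steps. First I transport the algebra and module structures along $T$. Then I show that $T$ sends the defining coequalizer of $N\otimes_A M$ to the defining coequalizer of $T(N)\otimes_{T(A)}T(M)$.

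\emph{Transporting the structures.} Let $\mu_{X,Y}:T(X)\otimes' T(Y)\xrightarrow{\sim}T(X\otimes Y)$ and $\epsilon:\mathbf{1}'\xrightarrow{\sim}T(\mathbf{1})$ be the structure isomorphisms of $T$. Put
\[
m' := \mu_{A,A}\,;\,T(m),\qquad \eta' := \epsilon\,;\,T(\eta),
\]
\[
\rho' := \mu_{N,A}\,;\,T(\rho_N),\qquad \lambda' := \mu_{A,M}\,;\,T(\lambda_M).
\]
The standard fact that a (lax, hence strong) monoidal functor preserves monoids and modules shows that $(T(A),m',\eta')$ is an algebra object in $\cV'$. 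It also shows that $T(N)$ and $T(M)$ are right and left $T(A)$-modules. This uses only the hexagon and unit compatibilities of $\mu$ with the associators and unitors. I will record the consequence I need later. Write $r',\ell'$ for the parallel pair of Definition~\ref{def:balanced-morphism} built from $(T(N),T(A),T(M))$ in $\cV'$, and set
\[
\theta := (\mu_{N,A}\otimes'\mathrm{id})\,;\,\mu_{N\otimes A,M} : (T(N)\otimes' T(A))\otimes' T(M)\xrightarrow{\sim}T((N\otimes A)\otimes M).
\]
Then
\[
r'\,;\,\mu_{N,M} = \theta\,;\,T(r),\qquad \ell'\,;\,\mu_{N,M} = \theta\,;\,T(\ell).
\]
The first identity is naturality of $\mu$ in the first variable. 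The second needs the associativity hexagon for $\mu$, because $\ell$ contains $\alpha_{N,A,M}$.

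\emph{Comparing the coequalizers.} By hypothesis, $T(r),T(\ell)\rightrightarrows T(N\otimes M)\xrightarrow{T(q)}T(N\otimes_A M)$ is a coequalizer in $\cV'$. Coequalizers are invariant under isomorphisms of parallel pairs. Precomposing with $\theta$ and postcomposing with $\mu_{N,M}$ turns this diagram into
\[
r',\ell' \rightrightarrows T(N)\otimes' T(M) \xrightarrow{\ \mu_{N,M}\,;\,T(q)\ } T(N\otimes_A M),
\]
and it is still a coequalizer. By Definition~\ref{def:relative-tensor-product}, the target is therefore canonically isomorphic to $T(N)\otimes_{T(A)}T(M)$. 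Naturality in module maps $(u,v)$ follows from the uniqueness clause in Proposition~\ref{prop:universal-property-relative-tensor} together with naturality of $\mu$.

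\emph{Balanced maps.} Let $\varphi$ be $A$-balanced, so $r\,;\,\varphi=\ell\,;\,\varphi$. Applying $T$ gives $T(r)\,;\,T(\varphi)=T(\ell)\,;\,T(\varphi)$. Using the two displayed identities above, this becomes
\[
r'\,;\,\mu_{N,M}\,;\,T(\varphi) = \ell'\,;\,\mu_{N,M}\,;\,T(\varphi),
\]
so $\mu_{N,M}\,;\,T(\varphi)$ is $T(A)$-balanced. This is the precise meaning of ``carries balanced maps to balanced maps.''

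The main obstacle is the identity $\ell'\,;\,\mu_{N,M}=\theta\,;\,T(\ell)$. It requires a careful diagram chase through the monoidal-functor hexagon, and it is exactly where the non-strict associator matters. I would first reduce it to the axiom $\alpha'\,;\,(\mathrm{id}\otimes'\mu)\,;\,\mu = (\mu\otimes'\mathrm{id})\,;\,\mu\,;\,T(\alpha)$, and then finish with naturality of $\mu$ in the second variable applied to $\lambda_M$.
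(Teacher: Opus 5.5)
Your proposal is correct and follows the same route as the paper. Both transport the actions along the strong monoidal structure, apply $T$ to the defining coequalizer, use the preservation hypothesis to identify $\mu_{N,M}\,;\,T(q)$ as the coequalizer defining $T(N)\otimes_{T(A)}T(M)$, and then apply $T$ to the balancing equation. Your explicit identities $r'\,;\,\mu_{N,M}=\theta\,;\,T(r)$ and $\ell'\,;\,\mu_{N,M}=\theta\,;\,T(\ell)$, together with reading ``balanced'' as a property of $\mu_{N,M}\,;\,T(\varphi)$, spell out the ``canonical identifications'' that the paper's proof leaves implicit.
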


\begin{proof}
Write the right/left actions as $\rho:N\otimes A\to N$ and $\lambda:A\otimes M\to M$.
Applying $T$ to these and using the strong monoidal structure gives induced actions
\[
T(N)\otimes' T(A)\xrightarrow{\ \cong\ }T(N\otimes A)\xrightarrow{T(\rho)}T(N),\qquad
T(A)\otimes' T(M)\xrightarrow{\ \cong\ }T(A\otimes M)\xrightarrow{T(\lambda)}T(M),
\]
so $T(N)$ is a right $T(A)$-module and $T(M)$ is a left $T(A)$-module.
Now apply $T$ to the coequalizer diagram \eqref{eq:coeq-def} defining $N\otimes_A M$.
Let $q':T(N)\otimes' T(M)\twoheadrightarrow T(N)\otimes_{T(A)}T(M)$ denote the coequalizer defining the relative tensor product in $\cV'$, and
let $\psi_{N,M}:T(N\otimes M)\xrightarrow{\ \cong\ }T(N)\otimes' T(M)$ be the strong monoidal coherence isomorphism.
By hypothesis, $T$ preserves the relevant coequalizer, so $T(q):T(N\otimes M)\to T(N\otimes_A M)$ is (up to canonical identifications)
a coequalizer of the transported parallel pair. Hence there is a unique induced isomorphism
$\theta:T(N\otimes_A M)\xrightarrow{\ \cong\ }T(N)\otimes_{T(A)}T(M)$ making the comparison diagram commute:
\begin{equation}\label{eq:monoidal-compare}
\begin{tikzcd}[column sep=huge]
T(N\otimes M) \ar[r, "T(q)"] \ar[d, swap, "\psi_{N,M}"]
  & T(N\otimes_A M) \ar[d, dashed, "\theta"]\\
T(N)\otimes' T(M) \ar[r, swap, "q'"]
  & T(N)\otimes_{T(A)}T(M)
\end{tikzcd}
\end{equation}
This yields the claimed natural isomorphism.
Finally, if $\phi:N\otimes M\to X$ is $A$-balanced, then \eqref{eq:balanced-eq} is an equality of morphisms
$(N\otimes A)\otimes M\to X$. Applying $T$ gives equality of the corresponding morphisms out of
$T(N)\otimes' T(A)\otimes' T(M)$, i.e.\ $T(\phi)$ is $T(A)$-balanced.
\end{proof}

\section{Interface algebra}\label{sec:from-physics}

The categorical theorem becomes a physics tool once the interface algebra and module structures are forced by standard QCD input:
locality/OPE, symmetry constraints, truncation data, and renormalization mixing.
We make this dependence explicit and isolate the resulting mathematical object.

\subsection{Operator sector selection}
Locality and the OPE provide a short-distance expansion in a basis of local operators organized by twist
\cite{Wilson1969,Wilson1970,WilsonZimmermann1972}. Fix an accuracy datum $\alpha$ (twist/power truncation and perturbative order).
Let $\mathsf{Op}_\alpha$ denote the truncated span of gauge-invariant local operators relevant to the observable class, modulo standard redundancies
(e.g.\ total derivatives as appropriate to the semantics).

\subsection{Canonical interface algebra}
Let $G$ denote the symmetry acting on $\mathsf{Op}_\alpha$ that we insist all admissible finite
counterterms respect (e.g.\ gauge invariance and the residual spacetime/flavor symmetries
appropriate to the kinematic regime and truncation). Assume we are in a semisimple setting so
that $\mathsf{Op}_\alpha$ is completely reducible as a $G$-representation. Let $\Lambda_\alpha$ denote
the (finite) set of isomorphism classes of irreducible $G$-representations that occur in
$\mathsf{Op}_\alpha$. For each $\lambda\in\Lambda_\alpha$, choose a representative irreducible
$G$-module $V_\lambda$ and define the corresponding multiplicity space
\[
M_\lambda \;:=\; \mathrm{Hom}_G\!\big(V_\lambda,\mathsf{Op}_\alpha\big),
\]
so that $\dim M_\lambda$ is the multiplicity of $V_\lambda$ inside $\mathsf{Op}_\alpha$ (and $M_\lambda$
carries the trivial $G$-action). Then one has the $G$-isotypic decomposition
\begin{equation}\label{eq:isotypic}
\mathsf{Op}_\alpha \;\cong\; \bigoplus_{\lambda\in\Lambda_\alpha} V_\lambda \otimes M_\lambda,
\end{equation}
where $G$ acts on each $V_\lambda$ and trivially on each $M_\lambda$.

In this decomposition, any $G$-equivariant endomorphism $T\in \mathrm{End}_G(\mathsf{Op}_\alpha)$
cannot mix inequivalent types $\lambda\neq\lambda'$. Moreover, by Schur's lemma (in particular,
when $\mathrm{End}_G(V_\lambda)\cong k$), the restriction of $T$ to each summand is of the form
\[
T|_{V_\lambda\otimes M_\lambda} \;=\; \mathrm{id}_{V_\lambda}\otimes T_\lambda
\qquad\text{for a unique } T_\lambda\in \mathrm{End}(M_\lambda).
\]
This is the sense in which $G$-equivariant endomorphisms act \emph{trivially} on the irrep factor
$V_\lambda$ and \emph{freely} on the multiplicity factor $M_\lambda$: within each isotypic block,
the $G$-equivariance condition imposes no further constraint on $T_\lambda$ beyond linearity.
Equivalently, the commutant (interface) algebra is
\begin{equation}\label{eq:commutant}
\mathrm{End}_G(\mathsf{Op}_\alpha)\;\cong\;\bigoplus_{\lambda\in\Lambda_\alpha}\mathrm{End}(M_\lambda).
\end{equation}
This makes precise a familiar physics statement: symmetry forbids mixing between inequivalent
operator quantum numbers, so admissible finite counterterms are block-diagonal.
(For a standard mathematical reference, see e.g.\ \cite{FultonHarris}.)


\begin{definition}[Interface algebra at accuracy $\alpha$]\label{def:Aalpha}
Let $A_\alpha$ be the algebra object whose elements are admissible finite renormalization/subtraction maps acting within $\mathsf{Op}_\alpha$,
compatible with the chosen symmetry constraints. Concretely, in semisimple symmetry settings one may take $A_\alpha\simeq \mathrm{End}_G(\mathsf{Op}_\alpha)$,
as in \eqref{eq:commutant}. We call $A_\alpha$ the \emph{interface algebra} at accuracy $\alpha$.
\end{definition}

\begin{remark}[What is and is not encoded in $A_\alpha$]
$A_\alpha$ encodes the \emph{finite} freedom in defining renormalized operators (and hence collinear subtraction conventions)
within the truncated sector. It does \emph{not} encode the hard physics of establishing that a given observable admits a factorized description in $R$,
nor does it encode the power-suppressed remainder beyond $\alpha$; those are placed in the semantic target $O_{R,\alpha}$.
\end{remark}

\subsection{Coefficient and hadronic data as modules}
Wilson coefficients form a right $A_\alpha$-module (they transform by inverse finite renormalizations),
while hadronic matrix elements/parton correlators form a left $A_\alpha$-module (they transform covariantly):
\[
f\mapsto a\cdot f,\qquad C\mapsto C\cdot a^{-1},\qquad a\in A_\alpha^\times.
\]
This is exactly the structural content of \eqref{eq:scheme} and \eqref{eq:coeff-inv}.

\subsection{Universal observable map}
Given module data $(C,f)$ and an $A_\alpha$-balanced evaluation $\Phi:C\otimes f\to O_{R,\alpha}$,
Theorem~\ref{thm:core} produces a unique induced map
\begin{equation}\label{eq:coremap}
\overline{\Phi}:\ C\otimes_{A_\alpha} f \longrightarrow O_{R,\alpha}.
\end{equation}
This is the precise sense in which the core $C\otimes_{A_\alpha}f$ is a \emph{representation-theoretic lift} of the physical observable:
it is the terminal object through which all balanced (scheme-invariant) presentations factor.

Because $\Phi$ is $A_\alpha$-balanced, it coequalizes the two action maps and hence admits the canonical lift
$\overline{\Phi}$ making the following diagram commute:
\begin{equation}\label{eq:core-lift-diagram}
\begin{tikzcd}[column sep=huge, row sep=large]
C\otimes A_\alpha\otimes f
  \ar[r, shift left=0.6ex, "\rho\otimes\mathrm{id}"]
  \ar[r, shift right=0.6ex, swap, "\mathrm{id}\otimes\lambda"]
&
C\otimes f \ar[r, "q"] \ar[dr, "\Phi"']
&
C\otimes_{A_\alpha} f \ar[d, "\overline{\Phi}"] \\
&& O_{R,\alpha}
\end{tikzcd}
\end{equation}

For orientation, one may summarize the induced objects/data as the following ``physics-to-core'' chain:
\begin{equation}\label{eq:physics-to-core}
\begin{tikzcd}[column sep=large]
\text{input }(G,\alpha) \ar[r, "{\text{\scriptsize OPE}}"]
  & \mathsf{Op}_\alpha \ar[r, "{\text{\scriptsize commut.}}"]
  & A_\alpha \ar[r, "{\text{\scriptsize module}}"]
  & (C,f,\Phi) \ar[r, "{\text{\scriptsize coeq.}}"]
  & C\otimes_{A_\alpha} f \ar[r, "\overline{\Phi}"]
  & O_{R,\alpha},
\end{tikzcd}
\end{equation}
which summarizes a concrete construction pipeline.
The guiding principle is that \emph{physical content} is what survives admissible scheme redefinitions,
and the categorical core $C\otimes_{A_\alpha} f$ is the universal carrier of precisely that invariant content.

\smallskip
\noindent\textbf{(i) Input $(G,\alpha)$}:
\begin{itemize}
\item a symmetry specification $G$ (in practice: gauge invariance and the additional global/discrete symmetries
relevant to the observable class, such as flavor symmetry blocks, charge conjugation, parity, and the Lorentz/twist grading), and
\item an accuracy datum $\alpha$ (e.g.\ a twist/power truncation and a perturbative order truncation),
together with an implicit choice of kinematic regime/phase-space domain $R$ defining the semantic target $O_{R,\alpha}$.
\end{itemize}
The role of $(G,\alpha)$ is to fix \emph{what is admissible}: which operators belong to the truncation,
which finite counterterms are allowed, and which equivalences should be regarded as ``pure scheme.''

\smallskip
\noindent\textbf{(ii) $(G,\alpha)\xrightarrow{\mathrm{OPE}} \mathrm{Op}_\alpha$:}
Locality and short-distance dominance supply an OPE description in which the observable is controlled,
to the stated accuracy, by a \emph{finite/truncated sector} of renormalized composite operators.
We denote by $\mathrm{Op}_\alpha$ the resulting operator sector: the span (or appropriate completion)
of those gauge-invariant local operators that contribute in the regime $R$ through the truncation $\alpha$,
modulo standard redundancies (total derivatives, equations of motion as appropriate for the observable, etc.).
This is the domain on which renormalization mixing acts and from which long-distance hadronic data is extracted.

\smallskip
\noindent\textbf{(iii) $\mathrm{Op}_\alpha\xrightarrow{\mathrm{commut.}} A_\alpha$:}
Renormalization and collinear subtraction induce \emph{operator mixing} within $\mathrm{Op}_\alpha$,
and finite scheme changes correspond to \emph{finite, invertible} redefinitions within this same sector.
The interface algebra $A_\alpha$ packages these admissible finite redefinitions into an algebra object.
A useful abstract characterization is:
$A_\alpha \;\subseteq\; \End(\mathrm{Op}_\alpha)$
 as the algebra of admissible endomorphisms compatible with the symmetry data $G$ and truncation $\alpha$.
The label ``commut.'' is meant in the standard sense of \emph{commutant / symmetry-preserving endomorphisms}:
$A_\alpha$ is obtained by restricting $\End(\mathrm{Op}_\alpha)$ to those maps that preserve the
$G$-decomposition (flavor blocks, twist grading, Lorentz structure, \emph{etc.}) and hence represent legitimate
finite counterterms/scheme transformations at the stated accuracy.
In moment space this is literally a matrix algebra of finite mixing maps on each spin/twist block;
in $x$-space it is realized by matrices of distribution-valued kernels with multiplication given by
(matrix) Mellin convolution.

\smallskip
\noindent\textbf{(iv) $A_\alpha\xrightarrow{\mathrm{module}} (C,f,\Phi)$:}
Once $A_\alpha$ is fixed, the standard physics covariance of scheme changes forces a bimodule pattern:
\begin{itemize}
\item the short-distance coefficient data $C$ carries a \emph{right} $A_\alpha$-module structure
(coefficients transform contravariantly under finite renormalizations), and
\item the long-distance hadronic data $f$ carries a \emph{left} $A_\alpha$-module structure
(matrix elements/PDFs transform covariantly).
\end{itemize}
The ``evaluation'' (or recomposition) map
\[
\Phi:\; C\otimes f \longrightarrow O_{R,\alpha}
\]
is the abstract encoding of the factorized formula (convolution in $x$-space, multiplication in moment space, \emph{etc.}).
The defining physical requirement is \emph{scheme invariance}:
inserting an admissible finite counterterm on the coefficient side or on the correlator side must give the same observable.
Categorically, this is precisely the balancing condition
\[
\Phi\bigl((C\cdot a)\otimes f\bigr) \;=\; \Phi\bigl(C\otimes (a\cdot f)\bigr)\qquad (a\in A_\alpha),
\]
so that $\Phi$ is an $A_\alpha$-balanced morphism.

\smallskip
\noindent\textbf{(v) $(C,f,\Phi)\xrightarrow{\mathrm{coeq.}} C\otimes_{A_\alpha} f$:}
The relative tensor product $C\otimes_{A_\alpha} f$ is defined as the coequalizer imposing the balancing relation.
In additive/linear settings one can think of it concretely as the quotient
\[
C\otimes_{A_\alpha} f
\;\cong\;
(C\otimes f)\Big/\big\langle (C\cdot a)\otimes f - C\otimes(a\cdot f)\;:\;a\in A_\alpha\big\rangle,
\]
i.e.\ the naive composite $C\otimes f$ with \emph{exactly} the scheme redundancy collapsed.
This is the precise sense in which the construction is ``parsimonious'': it removes no information except that which is
provably unphysical (pure scheme), and it removes \emph{all} such redundancy forced by the $A_\alpha$-action.

\smallskip
\noindent\textbf{(vi) $C\otimes_{A_\alpha} f\xrightarrow{\ \overline{\Phi}\ } O_{R,\alpha}$:}
By the universal property of the coequalizer, any $A_\alpha$-balanced evaluation $\Phi$ descends uniquely:
there exists a unique morphism
\[
\overline{\Phi}:\; C\otimes_{A_\alpha} f \longrightarrow O_{R,\alpha}
\quad\text{with}\quad
\Phi = \overline{\Phi}\circ q,
\]
where $q:C\otimes f\to C\otimes_{A_\alpha} f$ is the canonical quotient map.
Thus $C\otimes_{A_\alpha} f$ is the \emph{universal scheme-invariant carrier} of the observable content:
every scheme-invariant evaluation factors through it, and any further quotient would necessarily
identify two presentations that are distinguished by some scheme-invariant observable.
In particular, the ``core'' object does not depend on a preferred scheme; it depends only on
the symmetry/accuracy input $(G,\alpha)$ and the resulting admissible interface action.

\subsection{A minimal closure principle at fixed accuracy}\label{sec:closure}
A persistent practical question is: given a small set of ``primitive'' long-distance operators/correlators selected by
symmetry and power counting, what is the minimal enlargement required to obtain a sector stable under admissible
finite renormalizations (and hence suitable for a scheme-invariant core construction)?

\begin{definition}[\texorpdfstring{$A_\alpha$}{A_alpha}-closure]\label{def:closure}
Let $G_0\subseteq \mathsf{Op}_\alpha$ be a subset (or subobject) generating the desired long-distance sector.
Define its \emph{$A_\alpha$-closure} by
\[
\langle G_0\rangle_{A_\alpha} := A_\alpha\cdot G_0,
\]
the smallest left $A_\alpha$-submodule of $\mathsf{Op}_\alpha$ containing $G_0$.
\end{definition}

\begin{proposition}[Minimal scheme/RG-stable sector]\label{prop:closure}
$\langle G_0\rangle_{A_\alpha}$ is the unique minimal left $A_\alpha$-stable submodule of $\mathsf{Op}_\alpha$ containing $G_0$.
In particular, any long-distance sector containing $G_0$ and closed under admissible finite renormalizations must contain
$\langle G_0\rangle_{A_\alpha}$.
\end{proposition}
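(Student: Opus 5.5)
The plan is the standard argument that the submodule generated by a set is the intersection of all submodules containing it, made concrete via the definition $\langle G_0\rangle_{A_\alpha}=A_\alpha\cdot G_0$. Throughout, $\mathsf{Op}_\alpha$ is a left $A_\alpha$-module via the tautological action $A_\alpha\subseteq\End(\mathsf{Op}_\alpha)$ of Definition~\ref{def:Aalpha}. In the linear setting, $A_\alpha\cdot G_0$ denotes the $k$-span of $\{a\cdot g: a\in A_\alpha,\ g\in G_0\}$.

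First I will check that $A_\alpha\cdot G_0$ is a left $A_\alpha$-submodule containing $G_0$.

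\emph{Containment.} Since $A_\alpha$ is unital, with $\eta$ picking out $\mathrm{id}_{\mathsf{Op}_\alpha}$, every $g=1\cdot g$ lies in $A_\alpha\cdot G_0$.

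\emph{Closure.} For $b\in A_\alpha$ and a finite sum $\sum_i c_i\, a_i\cdot g_i$, linearity of the action and associativity give
\[
b\cdot\sum_i c_i\,a_i\cdot g_i=\sum_i c_i\,(ba_i)\cdot g_i,
\]
and each $ba_i$ lies in $A_\alpha$ because $A_\alpha$ is closed under multiplication. Hence the span is stable under the action. This is exactly where the algebra-object axioms (associativity and unit) are used.

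Second, I will prove minimality. Let $S\subseteq\mathsf{Op}_\alpha$ be any left $A_\alpha$-stable submodule with $G_0\subseteq S$. For each $a\in A_\alpha$ and $g\in G_0$ we have $a\cdot g\in S$ by stability, and $S$ is closed under finite linear combinations, so $A_\alpha\cdot G_0\subseteq S$. This inclusion is precisely the ``in particular'' clause of Proposition~\ref{prop:closure}: a sector closed under admissible finite renormalizations is an $A_\alpha$-stable submodule.

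\emph{Uniqueness.} If $S_1,S_2$ are both minimal stable submodules containing $G_0$, each contains $A_\alpha\cdot G_0$ by the previous step. Since $A_\alpha\cdot G_0$ is itself such a submodule, minimality forces $S_1=A_\alpha\cdot G_0=S_2$. Equivalently, $\langle G_0\rangle_{A_\alpha}$ is the intersection of all stable submodules containing $G_0$, which is a least element rather than merely a minimal one.

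The only real obstacle is a definitional one: what ``$A_\alpha\cdot G_0$'' means when $G_0$ is a subobject of a general $\cV$ rather than a subset of a vector space. In that case I would define the closure as the image of the composite
\[
A_\alpha\otimes G_0\to A_\alpha\otimes\mathsf{Op}_\alpha\xrightarrow{\lambda}\mathsf{Op}_\alpha,
\]
which requires $\cV$ to have images (e.g.\ to be abelian). I would then rerun both steps diagrammatically:
\begin{itemize}[leftmargin=2.0em]
\item stability follows from associativity, since $A\otimes A\otimes G_0\to A\otimes G_0$ factors through $m$;
\item minimality follows from the image's universal property applied to $S\hookrightarrow\mathsf{Op}_\alpha$.
\end{itemize}
For the QCD applications the linear case suffices, so I would state the proof there and note the abelian generalization.
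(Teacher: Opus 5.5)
Your proposal is correct and takes the same route as the paper. The paper asserts stability and containment of $G_0$ ``by construction,'' which you verify explicitly via the unit and associativity of $A_\alpha$, and then proves minimality through the inclusion $A_\alpha\cdot G_0\subseteq A_\alpha\cdot S\subseteq S$ for any $A_\alpha$-stable $S\supseteq G_0$; your least-element remark and your image-based sketch for general $\mathcal{V}$ go beyond what the paper states.
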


\begin{proof}
By construction $\langle G_0\rangle_{A_\alpha}$ is $A_\alpha$-stable and contains $G_0$.
If $M\subseteq \mathsf{Op}_\alpha$ is any $A_\alpha$-stable submodule containing $G_0$, then
$A_\alpha\cdot G_0\subseteq A_\alpha\cdot M\subseteq M$, hence $\langle G_0\rangle_{A_\alpha}\subseteq M$.
\end{proof}

\section{Instantiation: inclusive DIS in \texorpdfstring{$x$}{x}-space and moment space}\label{sec:dis-instantiation}

We now specify $A$, $N$, $M$, and $O$ for collinear DIS, show that the physical factorization pairing is balanced, and therefore descends canonically to the core.

\subsection{A monoidal model for \texorpdfstring{$x$}{x}-space collinear factorization}
Choose $\cV$ to model (matrix-valued) generalized functions on $(0,1)$ with monoidal product $\otimes$ realized as Mellin convolution $\mconv$
and unit $\delta(1-x)$; cf.\ \eqref{eq:mellin-conv}. We keep the analytic details abstract, using only Assumption~\ref{ass:ambient}.

\subsection{The collinear interface algebra from the twist-two operator sector}\label{sec:Acoll}

We now make Definition~\ref{def:Aalpha} completely explicit in the DIS leading-twist setting, by choosing a truncated operator basis
and reading off the corresponding commutant algebra of admissible finite counterterms.

\paragraph{Twist-two operator sector and symmetry blocks.}
At leading power in $\DIS$, the relevant $\OPE$ sector consists of gauge-invariant twist-two operators.
In moment space, for each spin (Mellin moment) $n\ge 2$ one has a finite-dimensional vector space
$\mathsf{Op}^{(n)}$ spanned by the corresponding local twist-two operators modulo the standard redundancies
(e.g.\ total derivatives and equations of motion as appropriate to the chosen semantics); see \cite{Collins:FoPQCD}.
The symmetry content of $\mathsf{Op}^{(n)}$ includes gauge invariance and (for massless QCD) flavor $SU(n_f)$.
In particular, the flavor decomposition splits $\mathsf{Op}^{(n)}$ into non-singlet blocks (which do not mix with gluons)
and a singlet block (in which the quark singlet mixes with the gluon operator).
Thus the admissible finite counterterms are block-diagonal with respect to this decomposition.

\paragraph{Moment-space interface algebra.}
Fix an accuracy datum $\alpha$ that includes a maximum spin $n\le n_{\max}$ (or, equivalently, restrict attention to finitely many moments).
Define the truncated operator sector
\[
\mathsf{Op}^{(\le n_{\max})}:=\bigoplus_{2\le n\le n_{\max}} \mathsf{Op}^{(n)}.
\]
Let $G$ denote the symmetry we insist counterterms respect (at minimum, gauge and flavor symmetry at the chosen truncation).
Then the interface algebra at this accuracy is
\begin{equation}\label{eq:Acoll-moment}
A_{\mathrm{coll}}^{(\le n_{\max})}\;:=\;\End_G\!\bigl(\mathsf{Op}^{(\le n_{\max})}\bigr)
\;\cong\;\bigoplus_{2\le n\le n_{\max}}\End_G(\mathsf{Op}^{(n)}),
\end{equation}
and the usual OPE mixing matrices $Z^{(n)}$ are elements of the unit group $\bigl(A_{\mathrm{coll}}^{(\le n_{\max})}\bigr)^\times$.
Concretely, in the unpolarized case the non-singlet blocks contribute scalar units (one per independent non-singlet channel),
while the singlet block contributes a $2\times 2$ matrix algebra acting on the $(\Sigma,g)$ sector.

\paragraph{$x$-space realization as convolution kernels.}
Returning to $x$-space, the same finite renormalizations are implemented by matrices of distribution-valued kernels
$Z=(Z_{ab})$ acting by Mellin convolution, with composition given by matrix convolution.
Accordingly, we define $A_{\mathrm{coll}}$ as the convolution-kernel realization of the commutant algebra in \eqref{eq:Acoll-moment}:
\begin{equation}\label{eq:Acoll}
(Z_1\cdot Z_2)_{ab}=\sum_c Z_{1,ac}\mconv Z_{2,cb},\qquad 1_{ab}=\delta_{ab}\,\delta(1-x).
\end{equation}
(Here $a,b$ label the chosen flavor/gluon basis, e.g.\ non-singlet and singlet channels.)
Invertible elements of $A_{\mathrm{coll}}$ are precisely the admissible finite scheme transformations used in collinear factorization
\cite{Collins:FoPQCD,PDG2024StructureFunctions}.

\begin{proposition}[Finite scheme transformations are units, and the action is faithful]\label{prop:units-faithful}
Work in a monoidal model $\cV$ in which $A_{\mathrm{coll}}$ is realized as a matrix algebra of convolution kernels with unit
$1=\delta(1-x)\,I$ and in which PDFs form a free left module $\vf$ over this algebra (e.g.\ a free module of flavor multiplets of
distributions on $(0,1)$ to the stated perturbative accuracy).
Then:
\begin{enumerate}[leftmargin=2.0em]
\item Any admissible finite factorization-scheme change in collinear DIS is implemented by an element $Z\in A_{\mathrm{coll}}^\times$,
acting by $\vf\mapsto Z\cdot \vf$ and $\vC_i\mapsto \vC_i\cdot Z^{-1}$.
Conversely, any $Z\in A_{\mathrm{coll}}^\times$ defines such a scheme change.
\item The induced left action of $A_{\mathrm{coll}}$ on $\vf$ is faithful: if $Z\cdot \vf=\vf$ for all $\vf$, then $Z=1$.
Similarly, the right action on coefficient multiplets is faithful in the same sense.
\end{enumerate}
\end{proposition}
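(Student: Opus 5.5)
The plan is to treat the two parts separately, with the first resting on the modelling conventions in Section~\ref{sec:Acoll} and the second on freeness of the PDF module together with the unit $1=\delta(1-x)\,I$.

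\textbf{Part 1.} I would unwind Definition~\ref{def:Aalpha} and \eqref{eq:Acoll}. An admissible finite scheme change is, by the physics input of Section~\ref{sec:physics}, an invertible matrix of distribution-valued kernels $Z$ acting as in \eqref{eq:scheme}. It respects the symmetry blocks, so it lies in the commutant \eqref{eq:Acoll-moment}, and its $x$-space realization lies in $A_{\mathrm{coll}}$. Its inverse is also admissible (undoing the change), and it lies in $A_{\mathrm{coll}}$ for the same reason. Hence $Z\in A_{\mathrm{coll}}^\times$.

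Conversely, for $Z\in A_{\mathrm{coll}}^\times$ I would define $v_Z(\vf)=Z\cdot\vf$ and $u_Z(\vC_i)=\vC_i\cdot Z^{-1}$. Associativity of matrix Mellin convolution and the balancing of the convolution pairing $\Phi_i$ then give
\[
\Phi_i\bigl(\vC_i\cdot Z^{-1},\,Z\cdot\vf\bigr)=\Phi_i\bigl(\vC_i\cdot Z^{-1}Z,\,\vf\bigr)=\Phi_i(\vC_i,\vf),
\]
which is exactly \eqref{eq:scheme-inv}. So $(u_Z,v_Z)$ is an isomorphism in $\mathsf{S}(A_{\mathrm{coll}};O_{R,\alpha})$, as in Remark~\ref{rem:scheme-groupoid}. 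Since $A_{\mathrm{coll}}$ was defined as the algebra of admissible kernels, this direction is essentially definitional, and I would say so explicitly.

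\textbf{Part 2.} Since the PDF module is free over $A_{\mathrm{coll}}$, I would pick a basis element $e$; the case of rank one, or the regular module, suffices. If $Z\cdot\vf=\vf$ for all $\vf$, then in particular $Z\cdot e=e=1\cdot e$. Freeness makes $a\mapsto a\cdot e$ injective, so $Z=1$. Concretely, take $\vf$ to be the multiplet $\delta_{ab}\,\delta(1-x)$ column by column; then $Z\cdot\vf$ reads off the columns of $Z$, and comparing with $\vf$ forces $Z_{ab}=\delta_{ab}\delta(1-x)$.

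For the right action on coefficients I would run the same argument, with the right unit and coefficient multiplets playing the role of a free right module. Alternatively, one can argue via the pairing: if $\vC\cdot Z=\vC$ for all $\vC$, then taking $\vC=1$ (the unit $\delta(1-x)\,I$ as a coefficient row) gives $Z=1$.

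\textbf{Main obstacle.} The difficulty is analytic, not algebraic. One must ensure that $\delta(1-x)\,I$ and the rows of the identity are legitimate elements of the coefficient and PDF modules in the chosen $\cV$. Otherwise one needs a density or separation argument: distributions on $(0,1)$ separated by test PDFs, using nondegeneracy of Mellin convolution, for example via injectivity of the Mellin transform on the relevant class. I would first try to assume the unit lies in each module, which is what the hypothesis of a free module already provides. Failing that, I would pass to moment space via Proposition~\ref{prop:monoidal}, where faithfulness reduces to finite-dimensional matrix algebras acting on themselves.
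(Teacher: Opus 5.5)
Your proposal is correct and matches the paper's proof in substance. Part 1 is the same essentially definitional identification of admissible scheme changes with units of $A_{\mathrm{coll}}$: the paper phrases it via invertible $G$-equivariant basis changes on each $\mathsf{Op}^{(n)}$ followed by inverse Mellin transformation, while you stay in $x$-space and make the converse explicit by checking the invariance identity through balancing. In Part 2, your column-by-column test with $e_b\,\delta(1-x)$ is exactly the paper's faithfulness argument.

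One caveat concerns your abstract basis-element argument. The physical flavor-column multiplets are not literally a free $A_{\mathrm{coll}}$-module: for the $2\times 2$ singlet block, the two-component $(\Sigma,g)$ columns form only one of the two column summands of the regular module. So it is the concrete column test, not the basis-element argument, that actually covers the PDF module. On the coefficient side, likewise use the rows $e_a^{\mathsf T}\,\delta(1-x)$ one at a time, rather than ``the unit $\delta(1-x)\,I$ as a coefficient row.''
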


\begin{proof}
(1) In the OPE/moment-space formulation, a finite renormalization corresponds to a change of basis in each $\mathsf{Op}^{(n)}$
by an invertible $G$-equivariant map, i.e.\ an element of $\End_G(\mathsf{Op}^{(n)})^\times$.
Assembling the finitely many spins $2\le n\le n_{\max}$ yields an element of $\bigl(A_{\mathrm{coll}}^{(\le n_{\max})}\bigr)^\times$,
and upon inverse Mellin transformation this becomes an invertible kernel matrix $Z\in A_{\mathrm{coll}}^\times$ acting by convolution.
The compensating transformation of Wilson coefficients (or coefficient functions) is exactly the inverse action, ensuring invariance
of the composite, as in \eqref{eq:scheme}; see \cite{Collins:FoPQCD,PDG2024StructureFunctions}.
Conversely, any $Z\in A_{\mathrm{coll}}^\times$ defines such a change of renormalized operator basis (or subtraction convention)
and hence a factorization-scheme change.

(2) If $Z\cdot \vf=\vf$ for all flavor multiplets $\vf$, apply this to the basis vectors $\vf=e_b\,\delta(1-x)$ (supported at $x=1$ in the
chosen monoidal model), where $e_b$ is the $b$th standard basis vector in flavor space. Then
$(Z\cdot \vf)_a = Z_{ab}$ must equal $\delta_{ab}\,\delta(1-x)$ for all $a,b$, i.e.\ $Z=1$.
The coefficient-side statement is analogous.
\end{proof}

\subsection{Coefficient and PDF modules}
Let $\vf$ be the flavor multiplet of PDFs and $\vC_i$ the multiplet of coefficient functions for $F_i$.
Define module actions
\[
(Z\cdot \vf)_a=\sum_b Z_{ab}\mconv f_b,\qquad (\vC_i\cdot Z)_b=\sum_a C_i^a\mconv Z_{ab}.
\]
Then \eqref{eq:scheme} is precisely the action of $Z\in A_{\mathrm{coll}}^\times$ and its inverse.

\subsection{Balancedness of the physical pairing}
Let $O$ be the object in $\cV$ representing scalar structure functions on $(0,1)$ (at fixed $Q$) to leading power.
Define
\begin{equation}\label{eq:Phi}
\Phi_i:\ \vC_i\otimes \vf \to O,\qquad \Phi_i(\vC_i,\vf):=\sum_a C_i^a\mconv f_a.
\end{equation}

\begin{proposition}[Scheme invariance implies balancedness]\label{prop:bal-dis}
$\Phi_i$ is $A_{\mathrm{coll}}$-balanced.
\end{proposition}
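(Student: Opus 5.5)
We need to show that $r\,;\,\Phi_i=\ell\,;\,\Phi_i$ as morphisms $(\vC_i\otimes A_{\mathrm{coll}})\otimes\vf\to O$, in the sense of Definition~\ref{def:balanced-morphism}. In the linear model of Remark~\ref{rem:coeq}, it suffices to check this identity on generators $\vC_i\otimes Z\otimes\vf$. There it reads
\[
\Phi_i(\vC_i\cdot Z,\vf)=\Phi_i(\vC_i,Z\cdot\vf),
\qquad Z\in A_{\mathrm{coll}} .
\]
Here $Z$ ranges over all of $A_{\mathrm{coll}}$, not just over units.

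The approach is a direct computation using the module actions of the previous subsection. The left-hand side is
\[
\sum_b\Bigl(\sum_a C_i^a\mconv Z_{ab}\Bigr)\mconv f_b .
\]
The right-hand side is
\[
\sum_a C_i^a\mconv\Bigl(\sum_b Z_{ab}\mconv f_b\Bigr).
\]
The two agree by three facts. First, the sums are finite flavor sums. Second, $\mconv$ is bilinear. Third, Mellin convolution is associative, $(C\mconv Z)\mconv f=C\mconv(Z\mconv f)$. This is exactly the associator $\alpha_{N,A,M}$ that appears in $\ell$.

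The one step needing justification is associativity of \eqref{eq:mellin-conv} on the relevant class of generalized functions. The plan is to verify it either by Fubini on $\int_x^1\frac{\dd y}{y}\int_{x/y}^1\frac{\dd z}{z}$ after substituting $w=yz$, or, more cleanly, by passing to Mellin moments. Under the Mellin transform, convolution becomes pointwise multiplication, which is associative. By Proposition~\ref{prop:monoidal} with the strong monoidal Mellin transform $T$, balancedness may be checked after applying $T$; there it is the matrix identity $(C^{(n)}Z^{(n)})f^{(n)}=C^{(n)}(Z^{(n)}f^{(n)})$. Going back to $x$-space requires $T$ to be faithful, which I would take from the injectivity of the Mellin transform on the chosen distribution space.

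The main obstacle is analytic, not algebraic. Plus-distributions and $\delta(1-x)$ terms mean Fubini is not automatic. I would rely on the paper's standing assumption that $\cV$ is a category in which $\mconv$ is associative and continuous, as in Assumption~\ref{ass:ambient} and Remark~\ref{rem:coeq}, so that associativity holds by hypothesis. Given that, bilinearity extends the generator check to the whole domain, and balancedness follows.

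As a consistency check, specializing to $Z\in A_{\mathrm{coll}}^\times$ and replacing $\vC_i$ by $\vC_i\cdot Z^{-1}$ recovers \eqref{eq:scheme-inv}. This confirms that the balancing relation is the physical scheme-invariance identity.
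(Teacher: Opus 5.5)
Your proposal is correct and is essentially the paper's proof: the paper also expands $\Phi_i(\vC_i\cdot Z,\vf)$ for arbitrary $Z\in A_{\mathrm{coll}}$ as the double flavor sum $\sum_{a,b}(C_i^a\mconv Z_{ab})\mconv f_b$ and rebrackets it to $\Phi_i(\vC_i,Z\cdot\vf)$ by associativity of $\mconv$, so your Mellin-moment route and analytic caveats are optional supplements. One small point of precision: the identity $(C\mconv Z)\mconv f=C\mconv(Z\mconv f)$ is associativity of the convolution operation on the data, i.e.\ compatibility of $\Phi_i$ with the two actions, and not the associator $\alpha_{N,A,M}$ of $\cV$, which in the linear model you invoke only rebrackets pure tensors.
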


\begin{proof}
For $Z\in A_{\mathrm{coll}}$,
\[
\Phi_i((\vC_i\cdot Z),\vf)=\sum_{a,b}(C_i^a\mconv Z_{ab})\mconv f_b
=\sum_{a,b} C_i^a\mconv (Z_{ab}\mconv f_b)=\Phi_i(\vC_i,(Z\cdot \vf)),
\]
using associativity of $\mconv$.
\end{proof}

By Theorem~\ref{thm:core}, $\Phi_i$ factors uniquely through the core:
\[
\vC_i\otimes \vf \xrightarrow{q} \vC_i\otimes_{A_{\mathrm{coll}}}\vf \xrightarrow{\ \overline{\Phi}_i\ } O.
\]
We interpret $\vC_i\otimes_{A_{\mathrm{coll}}}\vf$ as the canonical scheme-invariant carrier of the leading-power content of $F_i$, and $\overline{\Phi}_i$ as the specialization of the universal lift \eqref{eq:coremap} to $A_\alpha=A_{\mathrm{coll}}$.

\subsection{Toy computation: a two-channel (singlet--gluon) moment-space core}\label{sec:toy}
The abstract quotient $\,\vC_i\otimes_{A_{\mathrm{coll}}}\vf\,$ becomes especially transparent after passing to moment space.
Concretely, apply the $n$th Mellin moment to the $x$-space pairing \eqref{eq:Phi}.
Mellin convolution becomes ordinary multiplication, so the recomposed observable for fixed spin/moment $n$ takes the familiar OPE form
\begin{equation}\label{eq:toy-obs}
F^{(n)} \;=\; \sum_{a} C^{(n)}_{a}\, f^{(n)}_{a},
\end{equation}
where $f^{(n)}_a$ are hadronic matrix elements (moments of PDFs) in the twist-two sector and $C^{(n)}_a$ are the corresponding Wilson-coefficient moments.

To connect the toy algebra $A$ directly to the physics of operator mixing, recall that in unpolarized DIS the only \emph{nontrivial} mixing at fixed $n$
occurs in the singlet sector: the quark singlet operator (for two active flavors, $\Sigma=u+d$) mixes with the gluon operator.
Thus, restricted to this $(\Sigma,g)$ block, a finite scheme change at fixed $n$ is represented by an invertible $2\times 2$ matrix
$Z^{(n)}\in GL_2(\mathbb{R})$, acting by
\[
f^{(n)}\mapsto f^{(n)\prime}=Z^{(n)} f^{(n)},\qquad
C^{(n)}\mapsto C^{(n)\prime}=C^{(n)}(Z^{(n)})^{-1},
\]
exactly as in the moment-space version of \eqref{eq:scheme}.
(We ignore the non-singlet channel, which would contribute an additional $1\times 1$ block and does not affect the point of the example.)

Accordingly, model the fixed-$n$ singlet block by taking $\cV=\mathrm{Vect}_{\mathbb{R}}$,
\[
A=\mathrm{Mat}_2(\mathbb{R}),\qquad
N=\mathbb{R}^{1\times 2}\ \text{(right $A$-module)},\qquad
M=\mathbb{R}^{2\times 1}\ \text{(left $A$-module)},
\]
with the usual row/column actions.
Write the coefficient moment as a row $C^{(n)}=(C^{(n)}_{\Sigma},C^{(n)}_{g})$ and the hadronic moment as a column
$f^{(n)}=(f^{(n)}_{\Sigma},f^{(n)}_{g})^{\mathsf T}$.
Then \eqref{eq:toy-obs} is simply the matrix contraction $F^{(n)}=C^{(n)}f^{(n)}$.
The balancing relation $(C\cdot Z)\otimes f\sim C\otimes (Z\cdot f)$ is precisely the statement that inserting a finite mixing matrix on either side
does not change $F^{(n)}$.

Because $N$ and $M$ are free rank-one modules over $A$, the relative tensor product collapses to a single copy of the ground field:
\begin{equation}\label{eq:toy-core}
N\otimes_A M \;\cong\; \mathbb{R},
\end{equation}
with isomorphism induced by matrix multiplication $(n,m)\mapsto nm$.
In this toy setting, the coequalizer quotient removes \emph{all} presentation-dependent information in $(C^{(n)},f^{(n)})$ and retains exactly the scheme-invariant
recomposed moment $F^{(n)}$.

As a numerical illustration (chosen for transparency and with a perturbatively small scheme change), take
\[
C^{(n)}=(1.2,0.3),\quad f^{(n)}=(0.4,0.6)^{\mathsf{T}},\quad
Z^{(n)}=\begin{pmatrix}1&0.1\\0&1\end{pmatrix}.
\]
Then $f^{(n)\prime}=Z^{(n)}f^{(n)}=(0.46,0.6)^{\mathsf{T}}$ and
$C^{(n)\prime}=C^{(n)}(Z^{(n)})^{-1}=(1.2,0.18)$, and indeed
\[
C^{(n)}f^{(n)} = 0.66 \;=\; C^{(n)\prime}f^{(n)\prime}.
\]
In perturbative QCD one typically has $Z^{(n)}=I+\mathcal{O}(\alpha_s)$ with entries fixed (order by order) by the finite parts of renormalization/mass-factorization counterterms,
but the balanced-quotient logic is identical: the core stores exactly the scheme-invariant content and nothing else.

\section{Systematic refinement beyond leading power}\label{sec:filtered}

Factorization/OPE statements are asymptotic in $Q$ and are naturally organized by twist/power.
A categorical way to make the ``no information loss at stated accuracy'' principle precise, while retaining a parsimonious core,
is to treat the power expansion as a filtration and compute cores levelwise.

\subsection{Filtered objects and levelwise cores}
Let $\mathrm{Fil}(\cV)$ denote a category of filtered objects in $\cV$ with filtration $\{F^{\le p}X\}_{p\ge 0}$.
Assuming $\cV$ has the relevant colimits levelwise, $\mathrm{Fil}(\cV)$ is cocomplete and the monoidal product is defined levelwise
(e.g.\ via the Day convolution or via a Rees construction when appropriate).

\begin{proposition}[Cores in filtered categories]\label{prop:fil}
Suppose Assumption~\ref{ass:ambient} holds in $\cV$ and that filtered colimits/coequalizers in $\mathrm{Fil}(\cV)$ are computed levelwise.
Let $A$ be a filtered algebra object and $(N,M)$ filtered right/left modules. Then the filtered relative tensor product exists and is computed levelwise:
\[
(F^{\le p}N)\otimes_{F^{\le p}A}(F^{\le p}M)\;\cong\;F^{\le p}(N\otimes_A M),
\]
and hence there is a tower of ``cores'' refining with $p$.
\end{proposition}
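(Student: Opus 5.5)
Since both the coequalizer defining the relative tensor product and the monoidal product of $\mathrm{Fil}(\cV)$ are computed level by level, the plan is to show that the coequalizer in $\mathrm{Fil}(\cV)$ coincides at each level $p$ with the coequalizer in $\cV$ that defines $(F^{\le p}N)\otimes_{F^{\le p}A}(F^{\le p}M)$. The work then reduces to Definition~\ref{def:relative-tensor-product} applied at each level, with Assumption~\ref{ass:ambient} guaranteeing existence.

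The steps are as follows.

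First I make explicit the standing convention behind the statement: the monoidal product on $\mathrm{Fil}(\cV)$ is levelwise, $F^{\le p}(X\otimes Y)=F^{\le p}X\otimes F^{\le p}Y$, with transition maps $F^{\le p}\to F^{\le p+1}$ tensored. Under this convention a filtered algebra $A$ restricts to an algebra object $F^{\le p}A$ in $\cV$ for each $p$. Likewise $F^{\le p}N$ and $F^{\le p}M$ become right and left $F^{\le p}A$-modules. The transition maps are algebra maps, and module maps along them.

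Second, I note that the parallel pair $(r,\ell)$ of Definition~\ref{def:balanced-morphism}, formed in $\mathrm{Fil}(\cV)$, has level-$p$ component equal to the pair $(r_p,\ell_p)$ formed in $\cV$ from the level-$p$ module data. This holds because associators and the action maps are all levelwise.

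Third, I apply Assumption~\ref{ass:ambient}: for each $p$ the coequalizer $q_p:F^{\le p}N\otimes F^{\le p}M\to (F^{\le p}N)\otimes_{F^{\le p}A}(F^{\le p}M)$ exists in $\cV$. The transition maps $F^{\le p}\to F^{\le p+1}$ intertwine the two pairs, so by the universal property (Proposition~\ref{prop:universal-property-relative-tensor}) they induce unique maps between the level-$p$ and level-$(p+1)$ quotients. Uniqueness makes these compatible, which yields a filtered object. By the hypothesis that coequalizers in $\mathrm{Fil}(\cV)$ are computed levelwise, this filtered object, together with $(q_p)_p$, is the coequalizer of $(r,\ell)$ in $\mathrm{Fil}(\cV)$, namely $N\otimes_A M$. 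Reading off level $p$ gives the displayed isomorphism. The induced transition maps form the claimed tower of cores, and each level inherits Theorem~\ref{thm:core} in $\cV$.

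The main obstacle is the first step. If instead the product on $\mathrm{Fil}(\cV)$ is the Day convolution, then $F^{\le p}(X\otimes Y)=\operatorname{colim}_{i+j\le p}F^{\le i}X\otimes F^{\le j}Y$. In that case the level-$p$ tensor is not $F^{\le p}N\otimes F^{\le p}M$, and the identification fails in general; one gets only a comparison map from the colimit of mixed-level terms. I would therefore first state that the proposition is proved for the levelwise (diagonal) monoidal structure, which is the reading the statement's formula requires. For Day convolution I would either restrict to filtrations where the diagonal terms are cofinal, or replace the claim by the corresponding colimit formula. Here Assumption~\ref{ass:ambient}, that $\otimes$ preserves coequalizers in each variable, is what lets the coequalizer commute with the colimit.
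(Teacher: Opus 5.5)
Your proposal is correct and follows the paper's own argument: the paper's proof forms the defining coequalizer in each filtration degree and invokes the hypothesis that coequalizers in $\mathrm{Fil}(\cV)$ are computed levelwise. You additionally spell out details the paper leaves implicit, namely that the parallel pair is levelwise and that the transition maps are induced via the universal property. Your insistence on the diagonal (pointwise) monoidal structure is a needed clarification rather than an extra assumption. Under Day convolution, $F^{\le p}A$ is not closed under multiplication (products land in $F^{\le 2p}A$), so the displayed formula would not even be well posed, even though the paper lists Day convolution as an example of a levelwise product.
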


\begin{proof}
Form the defining coequalizer levelwise in each filtration degree and use the assumed levelwise preservation of the relevant coequalizers.
\end{proof}

\begin{remark}[DIS interpretation of the filtration]\label{rem:fil-dis}
For inclusive DIS, a natural choice is the twist/power filtration coming from the OPE:
one may take $F^{\le p}$ to mean ``twist $\le 2+p$'' (so $p=0$ is the leading-twist sector, $p=2$ corresponds to including twist~4, etc.).
This produces a tower of interface algebras $A^{\le p}$ (finite mixing within twist $\le 2+p$) and corresponding cores
$\mathrm{Core}^{\le p}:=C^{\le p}\otimes_{A^{\le p}} f^{\le p}$.
Increasing $p$ systematically adjoins the higher-twist operator/correlator sectors required by the OPE and refines the leading-twist core without altering the universal property at each level.
\end{remark}

If $O_R$ denotes an untruncated observable object and $\mathcal{R}_{>\alpha}\subset O_R$ the remainder subobject,
then $O_{R,\alpha}=O_R/\mathcal{R}_{>\alpha}$ is the semantic target used throughout.
Strengthening $\alpha$ amounts to shrinking $\mathcal{R}_{>\alpha}$, which produces a compatible inverse system of targets and hence a compatible inverse system of core maps
\eqref{eq:coremap}. Proposition~\ref{prop:fil} is one formal way to organize this refinement.

\section{Context, relations, and limitations}\label{sec:context}

\subsection{Interaction with SCET and resummation}
In soft-collinear effective theory (SCET), factorization and resummation are organized by matching QCD onto effective operators and evolving via RG equations.
From the perspective of this paper, SCET provides a particularly explicit construction of the interface algebra and module actions:
finite counterterms correspond to finite renormalizations of effective operators, while changes of representation (e.g.\ moment/Laplace/Fourier space) are monoidal transforms.
See e.g.\ the SCET lecture notes \cite{BecherBroggioFerrogliaSCET} and the PDG effective-theory review \cite{PDG2024SCET}.
Our theorem does \emph{not} prove SCET factorization; it can be used \emph{after} a factorized SCET formula is established to extract the scheme-invariant core.

\subsection{Non-perturbative renormalization and lattice matching}
The interface-algebra viewpoint applies equally well when renormalization factors are computed non-perturbatively (e.g.\ RI/MOM-type schemes) and then matched to $\MSbar$:
different renormalization conventions are simply different points in the same scheme groupoid, and the core object is invariant under those changes.
Modern lattice extractions of $x$-dependent structure via LaMET are a natural setting where one must manage such scheme transitions systematically; see e.g.\ \cite{Lin2025LatticePDFReview,JiEtAlHybridRenorm}.

\subsection{Renormalons and intrinsic non-perturbative ambiguities}
At fixed perturbative truncation, ``scheme'' can also refer to choices that reshuffle asymptotic series behavior.
Infrared renormalons encode genuine sensitivity to long distances and are tied to power corrections and non-perturbative ambiguities \cite{BenekeRenormalons}.
Our construction isolates redundancy \emph{within} a fixed truncation/semantic target; it does not remove intrinsic ambiguities that live in the remainder class $\mathcal{R}_{>\alpha}$.

\subsection{Relation to Hopf-algebraic renormalization and factorization algebras}
Hopf-algebraic formulations of perturbative renormalization (Connes--Kreimer and successors) organize counterterms and RG structure globally at the level of Feynman graphs
\cite{ConnesKreimer}. Our focus is different: we isolate the \emph{relative} scheme redundancy of a factorization presentation and compress it via a balanced quotient.
Nonetheless, the ``core as coinvariants'' viewpoint can be seen as a descent-style complement to Hopf-algebraic renormalization.

Factorization algebras provide another categorical framework for organizing observables and OPE structures in QFT \cite{CostelloGwilliamFA,WangWilliamsTH}.
Exploring the precise relationship between (i) local-to-global factorization-algebra semantics and (ii) the interface-algebra/balanced-quotient semantics used here
is an interesting direction, but lies beyond the scope of this collinear/OPE-focused paper.

\section{Conclusion}\label{sec:conclusion}
This paper isolates a structural feature that permeates perturbative QCD factorization and the
leading-twist OPE: the factorized constituents are not individually physical.
Short-distance coefficients and long-distance correlators are only defined up to admissible finite
redefinitions induced by collinear subtractions and renormalized-operator mixing, while the
recomposed observable is scheme independent (to a stated accuracy).
Our main contribution is to turn this familiar statement into a universal construction.

We encoded admissible finite redefinitions as an \emph{interface algebra object} $A_\alpha$
(Definition~6.1/6.3), acting on coefficient data $C$ as a right module and on hadronic data $f$
as a left module.
In this language, ``scheme invariance'' becomes a precise algebraic condition:
the physical evaluation map $\Phi:C\otimes f\to O_{R,\alpha}$ must be $A_\alpha$-balanced, i.e.\
inserting an admissible counterterm on the coefficient side is equivalent to inserting it on the
correlator side.
The Core Representation Theorem (Theorem~5.1) then states that the functor of balanced
pairings is \emph{representable} by the relative tensor product
\[
  \mathrm{Core}_\alpha \;:=\; C\otimes_{A_\alpha} f,
\]
and, moreover, that $\mathrm{Core}_\alpha$ is \emph{terminal} among all quotients of the naive composite
$C\otimes f$ that preserve scheme-invariant observable content.
This terminality is the precise mathematical sense in which the core is \emph{parsimonious}:
it compresses away exactly the internal scheme redundancy and nothing else.
Equivalently, $\mathrm{Core}_\alpha$ is an irreducible carrier of scheme-invariant information in the
universal-property sense: any further compression would necessarily erase information detectable
by some scheme-invariant observable.

The theorem is not merely a repackaging of the factorized formula; it provides a \emph{checklist}
for organizing, comparing, and extending factorized descriptions in a way that cleanly separates
physical content from presentation artifacts.
A practical workflow suggested by the paper is:
\begin{enumerate}
\item \textbf{Fix the semantics.} Choose a kinematic regime $R$ and accuracy datum $\alpha$
(power/twist truncation and perturbative order), thereby fixing a semantic target $O_{R,\alpha}$
(``agreement up to $\alpha$'').
\item \textbf{Identify the relevant operator sector.} Use locality/OPE (or an EFT matching description)
to select the truncated operator/correlator sector $\mathrm{Op}_\alpha$ controlling observables at $(R,\alpha)$.
\item \textbf{Determine the admissible scheme freedom.} Impose the symmetry constraints (gauge, flavor,
Lorentz/twist blocks, discrete symmetries) to identify which finite counterterms/mixing maps are allowed.
These assemble into the interface algebra $A_\alpha$ (often expressible as a commutant $\End_G(\mathrm{Op}_\alpha)$).
\item \textbf{Organize data as modules.} Treat coefficient data and long-distance data as right/left $A_\alpha$-modules.
This makes scheme transformations and operator mixing into honest algebra actions rather than ad hoc rules.
\item \textbf{Check balancedness once.} Verify the physical evaluation map $\Phi$ is $A_\alpha$-balanced;
in collinear DIS this is exactly associativity of convolution combined with the standard scheme-invariance identity.
\item \textbf{Work with the core object.} Replace the presentation-dependent composite $C\otimes f$ by the core
$C\otimes_{A_\alpha}f$, and treat $\overline{\Phi}:\,C\otimes_{A_\alpha}f\to O_{R,\alpha}$ as the canonical observable map.
Any other scheme-invariant pairing factors uniquely through the same core.
\end{enumerate}
This is an explicit conceptual shift with concrete benefits: it gives a canonical object that can be used to
compare different schemes, to change representations (moment space, $x$-space, Laplace/Fourier) without ambiguity
(Proposition~5.4), and to modularize the role of symmetry and truncation in determining what is ``allowed.''

In practical applications, one often starts from a small ``primitive'' set of long-distance operators/correlators
selected by symmetry and power counting, and then faces the question of what additional structures are
\emph{forced} by renormalization mixing.
Proposition~6.4 answers this in a clean way: the minimal sector stable under admissible redefinitions is the
$A_\alpha$-closure $\langle G_0\rangle_{A_\alpha}$.
This closure principle makes the frequently informal statement ``include all operators that mix at this order''
into a precise minimality claim.
Moreover, Section~8 explains how to treat power corrections as filtrations and obtain a tower of cores,
so that increasing accuracy refines the semantic target and refines the core levelwise, without changing the
universal property at each level.

Although we instantiated the construction in inclusive DIS to keep the discussion concrete, the categorical content
is intentionally agnostic to the analytic form of recomposition (Section~5.2):
the theorem does not require Mellin convolution per se, only an associative recomposition calculus equipped with an
interface-algebra action and the existence of the corresponding balanced quotient.
For physicists, this means the result should be read as a \emph{template}:
whenever a factorization theorem exhibits a scheme freedom that acts on the factors but cancels in the observable,
there is a canonical scheme-invariant core obtained by quotienting by the balancing relation.

The point of isolating the core object is not aesthetic economy but \emph{semantic control}:
it provides a canonical, scheme-invariant interface between perturbatively computed short-distance data and whatever
representation one uses for long-distance structure.
This is particularly valuable in modern hadron phenomenology, where long-distance information is often represented
numerically (global fits, lattice-matched objects, and other surrogates) and must be composed consistently with
perturbative ingredients while remaining invariant under permissible scheme changes.
The present paper supplies the categorical foundation for treating such long-distance objects as abstract elements
of a compositional calculus: once the $A_\alpha$-module action and the balanced evaluation are specified,
scheme invariance is enforced by construction via the universal core $C\otimes_{A_\alpha}f$.

At the same time, we stress the scope: we do not prove factorization, nor do we remove intrinsic physical
ambiguities living in the remainder beyond the chosen semantic target (Section~9.3).
Rather, the result is a structural tool that becomes applicable once a factorized/OPE description is established,
and that can guide systematic refinement (via filtrations/closures) and meaningful comparison of presentations.
We expect this viewpoint to be useful wherever new factorization statements are sought: it clarifies what must be
added to a model (new interfaces/sectors) to restore representability of the semantics, and it provides a canonical
notion of ``no further compression without loss of invariant content'' at fixed $(R,\alpha)$.

\section*{Acknowledgments}
This work was funded from the budget of the University of Virginia. No additional grants have been received to conduct or
direct this particular study.

Conflict of Interest: The authors of this work declare that they have
no conflicts of interest.

\bibliographystyle{unsrt}
\bibliography{ref}

\end{document}